\def\bb{\textbf{b}}
\def\bc{\textbf{c}}
\def\bl{\textbf{l}}
\def\bq{\textbf{q}}
\def\bt{\textbf{t}}
\def\bu{\textbf{u}}
\def\bv{\textbf{v}}
\def\bw{\textbf{w}}
\def\bx{\textbf{x}}  %{\mbox{\boldmath $\lambda$}}
\def\by{\textbf{y}}
\def\bz{\textbf{z}}
\def\bD{\textbf{D}}
\def\bL{\textbf{L}}
\def\bR{\textbf{R}}
\def\bS{\textbf{S}}
\def\bZ{\textbf{Z}}
\def\bo{\textbf{0}}
\def\bfell{{\boldsymbol{\ell}}}
\def\bfgamma{{\boldsymbol{\gamma}}}
\def\bfomega{{\boldsymbol{\omega}}}
\def\bfmu{{\boldsymbol{\mu}}}
\def\bfSigma{{\boldsymbol{\Sigma}}}
\def\bfsigma{{\boldsymbol{\sigma}}}
\def\bfpi{{\boldsymbol{\pi}}}
\def\bftheta{{\boldsymbol{\theta}}}
\def\bfxi{{\boldsymbol{\xi}}}
\def\bfbeta{{\boldsymbol{\beta}}}
\def\bfrho{{\boldsymbol{\rho}}}
\def\bfzeta{{\boldsymbol{\zeta}}}
\def\cL{\mathcal{L}}
\def\cO{\mathcal{O}}
\def\mE{\mathbb{E}}
\def\mR{\mathbb{R}}
\def\eq{\[}
\def\en{\]}
\def\smskip{\smallskip}
\def\texitem#1{\par\smskip\noindent\hangindent 25pt
               \hbox to 25pt {\hss #1 ~}\ignorespaces}
\def\abs#1{\left|#1\right|}
\def\norm#1{\left\|#1\right\|}
\newcommand{\BEAS}{\begin{eqnarray*}}
\newcommand{\EEAS}{\end{eqnarray*}}
\newcommand{\BEA}{\begin{eqnarray}}
\newcommand{\EEA}{\end{eqnarray}}
\newcommand{\BEQ}{\begin{eqnarray}}
\newcommand{\EEQ}{\end{eqnarray}}
\newcommand{\BIT}{\begin{itemize}}
\newcommand{\EIT}{\end{itemize}}
\newcommand{\BNUM}{\begin{enumerate}}
\newcommand{\ENUM}{\end{enumerate}}
\newcommand{\BA}{\begin{array}}
\newcommand{\EA}{\end{array}}
\newcommand{\ones}{\mathbf 1}
\newcommand{\reals}{{\mathbb{R}}}
\newcommand{\argmin}{\mathop{\rm argmin}}
\newcommand{\argmax}{\mathop{\rm argmax}}
\newcommand{\ES}[2]{\text{ES}_{#1}(#2)}
\def\algo{\textsc{SpecRiskAllocate}}
\def\FISTA{\textsc{FISTA}}
\newcolumntype{H}{>{\setbox0=\hbox\bgroup}c<{\egroup}@{}}
\title{Portfolio selection with multiple spectral risk constraints} 
\author{Carlos Abad\thanks{IEOR, Columbia
    University. \email{ca2446@columbia.edu}} 
  \and
  Garud Iyengar\thanks{IEOR, Columbia
    University. % \email{garud@ieor.columbia.edu}}
  Supported in part by NSF
  grants DMS-1016571, DOE grant DE-FG02-08ER25856 and ONR grant
  N000140310514}}
\begin{document}
\maketitle

\begin{abstract}
  We propose an iterative gradient-based algorithm to efficiently solve the portfolio
  selection problem with multiple spectral risk constraints. Since the
  conditional value at risk~(CVaR) is a special case of the spectral
  risk measure, our algorithm solves portfolio selection problems
  with multiple CVaR constraints. 
  %% Garud
  In each step, the algorithm solves
  very simple separable convex quadratic programs; hence, we show that the
  spectral risk constrained portfolio selection problem 
  can be solved using the 
  technology developed for solving mean-variance problems.
  %%%%%%%%%%% 
  The algorithm
  extends to the case where the objective is a weighted sum of the
  mean return and either a weighted combination or the maximum of a set of spectral risk
  measures. We
  report numerical results that show that our proposed algorithm is
  very efficient; it is at least one order of magnitude faster than
  the state-of-the-art general purpose solver for all practical
  instances.  
  % Moreover, we show how the efficiency of the proposed algorithm allows
  %% Garud
  One can leverage this efficiency to be robust against model risk by
  including  
  % portfolio managers to include
  constraints with respect to several different risk
  models. % in order to select portfolios that are robust with respect to
  % uncertainty in risk parameters. 
\end{abstract}

\begin{keywords}  large scale portfolio optimization, coherent risk measures, first-order algorithms \end{keywords}

\begin{AMS} 90C90, 90B50, 91G10 \end{AMS}

\section{Introduction}\label{intro}
%% Garud
Portfolio selection %problem % can be broadly defined as that of
is concerned with
distributing a given % amount of
capital over a finite number of
investment opportunities in order to maximize ``return'' while managing
``risk''. Although, the benefits of diversification to manage ``risk'' had
been long known, \citet{Markowitz1952} was the first to propose a mathematical model
for the portfolio optimization problem, representing % . In his model, Markowitz represented
``return'' by the expected return of the portfolio, and ``risk'' by the
variance in the return of the portfolio.
It has been observed that variance is a good measure of risk \emph{only} if 
the returns are \emph{elliptically} distributed.
%The goal of portfolio selection is to distribute a fixed amount of
%capital over a given set of investment opportunities to maximize
%``return'' while managing ``risk''. Although the benefits of
%diversifying were well-known, the first mathematical model for
%portfolio selection was proposed by ~\cite{Markowitz1952}. In the
%Markowitz model, the ``return'' of a portfolio is given by the
%expected return of the portfolio and the ``risk'' of the portfolio is
%measured by the variance of the return of the portfolio. The variance
%is a good measure of risk only if the returns are {\em elliptically}
%distributed.  % \textcolor{red}{The returns on equity, at least for
%   short time horizons, can be approximated by a Normal random
%   variable}; consequently, the variance is an adequate measure for the
% risk in the portfolio.
%However, when the distribution of the returns
%of the underlying assets is not elliptic, variance is {\em not} an
%adequate risk measure.
Moreover, since variance is not sensitive to the
tails of the distribution, it is not a good measure of variability
when the returns are heavy tailed.

A number of  risk measures have been proposed in the
literature to accommodate asymmetry and also capture the effects of
heavier tails.  The Value-at-Risk $\text{VaR}_{\beta}(\tilde{L})$ at
the probability level $\beta$ for a random loss $\tilde{L}$ is defined
as the $\beta$ quantile of the loss distribution, i.e. the probability
of observing losses larger than $\text{VaR}_{\beta}(\tilde{L})$ is at
most $1-\beta$~\citep{Jorion2006}. VaR is extensively
used in risk management applications, and it is the mandated risk measure
in the Basel-II accords. However, it has a number of
shortcomings. First, VaR only depends on the probability of tail
losses and not their location in the tail. Second, VaR is not a convex
risk measure; consequently, portfolio selection with VaR constraints
often results in integer programs that are hard to solve. 

Conditional Value-at-Risk $\text{CVaR}_{\beta}(\tilde{L}) =
\mE[\tilde{L}\mid \tilde{L} \geq
\text{VaR}_{\beta}]$~\citep{RockafellarUryasev2000} and Expected
Shortfall $\text{ES}_{\beta} = \frac{1}{1-\beta}\int_\beta^1
\text{VaR}_p(\tilde{L}) dp$~\citep{Acerbi2002_ES} are closely related
risk functions that address the two shortcomings of VaR listed above. CVaR and
ES are both coherent risk
measures~\citep{ArtznerDelbaenEberHeath1999}, i.e. they are convex and
positively homogeneous. \cite{Acerbi2002_ES} showed that the ES of a
portfolio can be estimated from samples of the losses on the underlying
assets by solving a linear program~(LP), and that the estimate converges to the
ES of the portfolio with probability 1. \citet{RockafellarUryasev2000}
showed a similar result for CVaR assuming that the loss distribution
of the portfolio is continuous at the $\beta$ quantile.
\cite{Acerbi2002} extended ES to the spectral risk measure
$M_{\phi}(\tilde{L}) = \int_0^1 \text{VaR}_{p}(\tilde{L})\phi(p)dp$,
where $\phi(p)$ is a non-increasing probability distribution function.
The spectral risk measure
$M_{\phi}(\tilde{L})$ is coherent and, in fact,
$\text{ES}_{\beta}(\tilde{L}) = M_{\hat{\phi}}(\tilde{L})$ with
$\hat{\phi}(p) = \frac{1}{1-\beta} \ones_{\beta \leq p \leq 1}$.
\citet{Acerbi2002} also showed that the finite sample estimate
$M_{\phi}^N = \sum_{k=1}^N \phi(\frac{k}{N}) L_{(N-k)}$, where
$L_{(k)}$ denotes the $k$-th order statistic of $N$ independent
and identically distributed~(IID) samples of the random loss $\tilde{L}$,
converges to $M_{\phi}(\tilde{L})$ with probability $1$.

From % the results in 
\citet{Acerbi2002}, it follows that the portfolio
selection problem where the ``return'' is given by the expected
return of the portfolio and the ``risk'' is given by a spectral
risk measure of the portfolio can
be approximated by an LP. \cite{RockafellarUryasev2000} established
such an LP-based approximation result for the mean-CVaR portfolio
selection problem. 
\citet{AgarwalNaik2004} showed that the mean-CVaR
portfolio selection results in superior portfolios as compared to the
mean-variance approach when the risk of the  assets is nonlinear in the
underlying risk factors, e.g. when the asset is a derivative written on a
primary asset.
However, the resulting LP is very ill-conditioned, and 
solving such LP,
particularly when the scenario size is large, is very difficult in
practice~(see, e.g. \citep{AlexanderColemanLi2006}).
\citet{lim2009fragility} showed that the solution of the
mean-CVaR portfolio problem is often very sensitive to
estimation errors, i.e. small errors in the estimation of the mean and
the return in the scenarios can get amplified in the choice of the optimal
portfolio. 
% One way to address this
This sensitivity can be addressed by % introduce model robustness by
imposing spectral risk constraints with respect to several different
parameter values and also different risk models.
Constraints with respect to multiple risk models have become
especially important after the 2008 financial
crisis~(see, e.g. \citep{ceria2009novel}).
However, imposing multiple spectral risk constraints increases the size of
the LP by such an extent that 
state-of-the-art solvers are unable to solve most practical instances
of the portfolio selection problem.

Our contributions in this paper are as follows:
\begin{enumerate}[(a)]
\item We propose a new first-order gradient based algorithm
  \textsc{SpecRiskAllocate} to solve portfolio selection problems with
  multiple spectral risk constraints that is significantly faster than the
  naive LP-based approach. We exploit two key features of the portfolio
  selection problem to construct this 
  algorithm. The first is that the constraints in the LP 
  formulation~\eqref{eq:portfolio-lp} are
  very loosely coupled in that the samples from a particular risk model
  only play a role in the corresponding constraint. Thus, one can improve
  the run time of the algorithm by dualizing these constraints, provided 
  feasibility is maintained. We show in Theorem~(\ref{thm:penaltyrep}) that we are able
  to recover feasible portfolios for finite values of the dual
  variables. The second feature we exploit is that, since the LP is in
  fact a finite sample approximation to the stochastic optimization
  problem, in practice one is not attempting to solve it to
  very high accuracy~(e.g. $10^{-12}$ relative   error) but rather one is
  satisfied with moderate 
  accuracy~(e.g. $10^{-3}$ relative error). 
% The second feature we exploit is that, given that one uses
%   samples to approximate the spectral risk measures, solving the portfolio selection
%   problem to a very high degree of accuracy is not necessary.
  This
  allows us to % sacrifice accuracy by
  smooth the LP into a smooth convex optimization problem,
  resulting in significantly  faster convergence.

\item  \textsc{SpecRiskAllocate}
  computes the optimal portfolio by solving a sequence of small
  separable convex quadratic programs~(QPs). Thus, portfolio managers would be able
  to solve spectral risk constrained portfolio selection problems using
  existing tools for solving mean-variance problems. The number of
  variables in each of the convex QPs is equal to the number of
  assets and, therefore, these problems can be solved very efficiently.
  In some cases, the optimal solution of the mean-variance
  subproblem can be written in closed form or computed by a one
  dimensional search. 
  \textsc{SpecRiskAllocate} is also able to solve portfolio
  selection problems where the objective is to maximize a weighted sum of
  the  expected return and either a weighted
  combination or the maximum of a set of spectral risk measures.

\item The experimental results in Section~\ref{sec:results} clearly
  show that \textsc{SpecRiskAllocate} is able to efficiently solve very large
  spectral risk constrained portfolio selection problems.
  For most practical instances, \textsc{SpecRiskAllocate} is
  at least one order of magnitude faster than the state-of-the-art LP
  solvers.
  Moreover, we show that, in contrast to the LP-based method, \algo\ is \emph{not}
  ill-conditioned. This is a side-benefit of smoothing the problem.
  ``Smoothing'' approximates the LP polytope by a convex set
  without corners; thus, ensuring that the optimal solution is a
  continuous function of the problem and, therefore, not ill-conditioned.

\item A popular method for introducing robustness against model
  uncertainty is to 
  impose spectral risk constraints with respect to several
  risk models (see e.g. \citet{axioma032} and \citet{axioma040}). 
  In Section \ref{sec:results}, we  show that
  \textsc{SpecRiskAllocate} is able to
  solve a hedging portfolio
  selection problem with spectral risk constraints corresponding to
  multiple risk models in a
  computationally tractable manner.
  
\end{enumerate}
\textsc{SpecRiskAllocate} is based on the proximal gradient algorithm
\FISTA~proposed by \citet{BeckTeboulle2009} (see also
~\cite{Nesterov2005}). The algorithm we propose is similar to the
% algorithm
one proposed by \cite{iyengar2013fast} in that
both these algorithms use Nesterov smoothing
techniques~\citep{Nesterov2005}. However, there are a number of key
differences between the two methods. The algorithm in \cite{iyengar2013fast}
is only able to solve a mean-CVaR problem and can be extended to solve a
mean-weighted CVaR problem; however, it is not able to compute solutions
for portfolio selection problems with CVaR (or, more generally, spectral
risk) \emph{constraints}. \textsc{SpecRickAllocate} uses a different smoothing
technique that allows us to scale the algorithm to solve very large
portfolio selection problems without encountering any numerical
difficulty. \cite{iyengar2013fast} were unable to solve large problem
instances because the algorithm proposed therein quickly becomes numerically
unstable.

 % However, we would like to
 %  highlight some differences.  The algorithm proposed in
 %  \cite{iyengar2013fast} solves the minimum risk mean-CVaR problem with
 %  Nesterov's gradient descent method~\citep{Nesterov2005}.  The
 %  algorithm we propose is an addaption of \FISTA~ that solves the more
 %  difficult portfolio selection with multiple risk constraints
 %  problem, and can scale to much larger problem sizes.  If we could
 %  modify Iyengar and Ma's algorithm to consider multiple CVaR measures
 %  in the objective then both algorithms would be equivalent in theory,
 %  although not necessarily in practice.  }
% \textsc{SpecRiskAllocate} can solve portfolio selection problems
% with multiple risk constraints and can scale to much larger problem
% sizes.

The rest of this paper is organized as follows. In
Section~\ref{sec:model} we introduce the generalized spectral risk
measures and define the generalized spectral risk constrained portfolio selection problem. In
Section~\ref{sec:algo} we construct the \textsc{SpecRiskAllocate}
algorithm. In Section~\ref{sec:results} we discuss the results of our
numerical experiments. Finally, in Section~\ref{sec:conc} we conclude with some final remarks.

\section{Single period portfolio selection problem}
\label{sec:model}
Suppose there are $n$ assets in the market. Let $\tilde{\bL} =
\left(\tilde{L}_1, \ldots, \tilde{L}_n\right)^\top \in \reals^n$ 
denote the random rate of loss on the assets. Let $\bx \in \reals^n$
denote the portfolio of the investor, i.e., $\ones^{\top}\bx =
\sum_{i=1}^nx_i = 1$. The rate of loss $\tilde{L}_x$ of portfolio
$\bx$ is given by $\tilde{L}_x = \tilde{\bL}^\top\bx$. In this
paper, we want to identify portfolios that lie on the Pareto optimal
frontier with respect to the expected return $-\mE[\tilde{L}_x]$
and a set of generalized spectral risk measures~\citep{Acerbi2002}.

Except for some special cases --e.g. when the random loss vector $\tilde{\bL}$ 
is a
linear function of the distribution of elliptically distributed risk
factors $\tilde{\bZ}$-- the distribution of the random portfolio loss
$\tilde{L}_x$ is hard to characterize explicitly. This is definitely
the case if the portfolio $\bx$ contains derivative securities whose
distribution is nonlinear in the underlying risk factors. In practice,
$\tilde{\bL}$ is approximated by $N$ samples $\{\boldsymbol{\ell}_1,
\ldots, \boldsymbol{\ell}_N\}$ generated by some scenario
generator~(see, e.g. \cite{KoskosidisDuarte1997}).
Let $\bL = \left( \boldsymbol{\ell}_1, \ldots, \boldsymbol{\ell}_N \right)^\top \in \reals^{N \times n}$ denote the empirical loss
matrix, where the $j$-th column represents the vector of $N$ loss
realizations of asset~$j$. Thus, the random loss $\tilde{L}_x$ on the
portfolio $\bx$ can be approximated by the set of samples
$\{\boldsymbol{\ell}_1^\top\bx, \ldots, \boldsymbol{\ell}_N^\top\bx\}$
or, equivalently, by the vector $\bL\bx$. In the rest of this section, we
define the generalized spectral loss function for the vector $\bL\bx$
and relate it to the Expected Shortfall measure. This relation will be
important for designing our solution algorithm in Section~\ref{sec:algo}.

\subsection{Generalized spectral risk measures}
Let $\by = (y_1, \ldots, y_N)^\top$ denote $N$ samples of a random
variable $\tilde{Y}$. Let $\{y_{(\ell)}: \ell = 1, \ldots, N\}$
denote the order statistics of vector $\by$.
  
\begin{definition}[Expected shortfall (ES) \citep{Acerbi2002_ES}]
   The expected
  shortfall of $\by$ at level $\beta \in [0,1)$ is the average of the
  $\kappa = \lceil (1-\beta) N \rceil$ largest values of $\by$, i.e.,
  \eq
    \ES{\beta}{\by} = \frac{1}{\kappa} \sum_{\ell = N-\kappa
      +1}^N y_{(\ell)}.
  \en
\end{definition}

It is easy to check that $\ES{\beta}{\by}$
has the following variational characterization~(see,
e.g. \cite{ArtznerDelbaenEberHeath1999,RockafellarUryasevZabarankin2002,LuthiDoege2005})
: \eq \ES{\beta}{\by} =
\begin{array}[t]{rl}
  \max & \sum_{\ell=1}^N q_\ell y_\ell,\\
  \text{such that} &  \ones^\top \bq = 1,\\
  & \bo \leq \bq \leq \frac{1}{\kappa}\cdot \ones.
\end{array}
\en
Using linear programming duality~\citep{bertsimas1997introduction} it follows that
\begin{equation}
  \label{eq:ES-dual}
  \ES{\beta}{\by} =    \min_z \left\{ z + \frac{1}{\kappa} \cdot
    \sum_{\ell=1}^N (y_\ell-z)^+\right\},
\end{equation}
where $v^+ = \max\{v,0\}$.  \citet{Acerbi2002_ES} established that
$\ES{\beta}{\cdot}$ is a coherent risk
measure~\citep{ArtznerDelbaenEberHeath1999} and converges to CVaR~\citep{RockafellarUryasevZabarankin2002,LuthiDoege2005}
when the cumulative distribution function $F_Y(\cdot)$ of the random
variable $\tilde{Y}$ is continuous at $y = \inf\{x: F_Y(x)\geq
\beta\}$.

% The conditional value at risk $\CVaR{\beta}{\hat{\bL}\bx}$ at the
% probability $\beta \in [0,1]$ of the portfolio $\bx$ is defined as
% \eq \CVaR{\beta}{\hat{\bL}\bx} = \mE^{\mP}[\hat{\bL}\bx \mid
% \hat{\bL}\bx \geq \cF_{\hat{\bL}\bx}^{-1}(1-\beta)], \en where the
% random variable $\hat{\bL}\bx$ denotes the loss on the portfolio
% $\bx$, and has probability measure $\mP$ and cumulative density
% function~(CDF) $\cF_{\hat{\bL}\bw}$.  Thus, the CVaR is conditional
% expectation of the highest $(1-\beta)$-quantile of the random
% portfolio loss.

% CVaR has the following variational characterization
% ~\citep{ArtznerDelbaenEberHeath1999,RockafellarUryasevZabarankin2002,LuthiDoege2005}
% \eq \CVaR{\beta}{\hat{\bL}\bx} =
% \begin{array}[t]{rl}
%     	\max & \mE^\mQ(\hat{\bL}\bx) \\
%     	% 	    \text{s.t.} & 0 \leq \frac{\partial \mQ}{\partial
%     	%       \mP} \leq \frac{1}{\beta} \\
%     	%    & \mQ \mbox{ is a probability measure.}
% 	\end{array}
%  \en

%  Then \eq \CVaR{\beta}{\bL\bx} \approx
% 	\begin{array}[t]{rl}
%     	\max & \bq^{\top} \bL \bx \\
%     	% 	    \text{s.t.} & 0 \leq \bq \leq \frac{\bp}{\beta} \\
%     	% 	    & \ones^{\top} \bq = 1.
% 	\end{array}
%  \en

\begin{definition}[Spectral risk measure \citep{Acerbi2002}] 
  Let $\bfomega =
  (\omega_1, \ldots, \omega_N)^\top$ denote a non-decreasing
  probability mass function, i.e. $\bfomega \geq \bo$,
  $\ones^\top\bfomega = 1$, and $\omega_k \geq \omega_\ell$ whenever
  $k \geq \ell$.  The spectral risk measure $M_{\omega}(\by)$
  generated by $\bfomega$ is defined as
  \eq
    M_{\omega}(\by) = \sum_{\ell=1}^{N} \omega_{\ell}y_{(\ell)}.
    \label{spectlossfn}
  \en
  % where
  % \BIT
  % \item $X_{(l)}$ is the $l$-th order statistic (in particular,
  % 	 $X_{(1)} = \min\limits_{l=1,\cdots,N}X_{l}$ and $X_{(n)} =
  % 	 \max\limits_{l=1,\cdots,N}X_{l}$), and
  % \item $\omega_{l}\geq 0$ for $l=1,\cdots,N$,
  % 	 $\sum\limits_{l=1}^{N} \omega_{l} = 1$, and $\omega_{l}\geq
  % 	 \omega_{k}$ for $l\geq k$.  \EIT
\end{definition}
Let $\omega_{0} = 0$. Then,
\begin{align*}
  M_{\omega}(\by) & = \sum_{\ell=1}^{N} \omega_{\ell}y_{(\ell)}
  	= \sum_{\ell=1}^N (\omega_{\ell} - \omega_{\ell-1})
  \left(\sum_{j = \ell}^N 
    y_{(j)}\right)
  = \sum_{\ell=1}^N \gamma_{\ell} \ES{\beta_\ell}{\by},
\end{align*}
where $\gamma_{\ell} = (N-\ell+1)(\omega_{\ell}-\omega_{\ell-1})\geq
0$ and $\beta_{\ell} = \frac{\ell-1}{N}$. Hence,
it follows that $M_{\omega}(\by)$ is a coherent risk measure. It is
easy to check that $\sum_{\ell=1}^N \gamma_\ell = \sum_{\ell=1}^N
\omega_\ell = 1$, i.e. $\bfgamma$ is a probability mass function.
This motivates the following definition.
\begin{definition} [Generalized spectral risk measures]
 Let $\bfgamma \in
  \reals^d$ denote a probability mass function, i.e. $\bfgamma \geq
  \bo$ and $\ones^\top\bfgamma =1$. Let $\bfbeta \in [0,1)^d$.  The
  generalized spectral risk measure $\rho_{\gamma,\beta}(\by)$ is defined as 
  $$\rho_{\gamma,\beta}(\by) =
  \sum_{\ell=1}^{d} \gamma_{\ell} \ES{\beta_{\ell}}{\by}.$$
\end{definition}
%Note that the spectral risk measure~(\ref{spectlossfn}) is a special
%case of the generalized spectral risk measure~(\ref{eq:genspectralrisk}) with $d=N$,
%$\gamma_{\ell} = (\omega_{\ell}-\omega_{\ell-1})(N-\ell+1)$ and
%$\beta_{\ell} = \frac{\ell-1}{N}$.

\subsection{Portfolio selection problem}
We measure the risk of portfolio $\bx$ using $m$ different risk
models. Let $\bL_k \in \reals^{N_k\times n}$ denote the empirical loss
matrix corresponding to the $k$-th risk model, where $N_k$ denotes the
number of samples drawn according to the $k$-th model. The risk of
portfolio $\bx$ according to the $k$-th model is captured by a generalized
spectral risk measure $\rho_{\gamma_k,\beta_k}(\bL_k \bx)$, $k = 1, \ldots,
m$. In the remainder of this paper, we will abbreviate
$\rho_{\gamma_k,\beta_k}$ simply as $\rho_k$.

The goal of the spectral risk constrained portfolio selection problem is to find the portfolio $\bx$ that maximizes the expected return. Let $\bfmu \in \reals^n$ be the mean return vector. $\bfmu$ is typically set
equal to the weighted average $\bfmu = -\sum_{k=1}^m q_k \frac{1}{N_k} 
(\bL_k^\top\ones)$, where $\bq$ is a probability mass function that assigns 
weights to the $m$ risk models. Hence, the expected return of portfolio $\bx$ 
is $\bfmu^\top \bx$. Given that cardinality constraints are important in 
practice to control the transaction costs~\citep{chang2000heuristics}, we are 
interested in selecting sparse portfolios, i.e. portfolios whose $\ell_0$-norm 
$\sum_{i=1}^n \mathbbm{1}(\abs{x_i}>0)$
is small. Unfortunately, the associated
cardinality constrained portfolio selection problem is typically
NP-hard.  Nonetheless, a good approximation is to replace the $\ell_0$-norm with
the $\ell_{1}$-norm $\sum_{i=1}^n
\abs{x_i}$~\citep{candes2008enhancing}.
Thus, the spectral risk constrained sparse portfolio selection problem we want to solve is of the form:
% \begin{equation}
% \begin{array}{rl}
%     	\max & \bfmu^{\top}\bx \\
%     	% 	    \text{s.t.} & \rho_{\gamma_k,\beta_k}(\bL_{k}\bx)
%     	%     \leq \alpha_{k} \quad k=1,\cdots,m\\
%     	%    & \ones^{\top}\bx =1 \\
%     	%   & \norm{\bx}_\infty \leq B,\\
% \end{array}
% \end{equation}
% where $\bL_{k} \in \reals^{N_{k}\times n}$ is the empirical loss
% matrix used for constraint $k$ and $\mu$ is the expected return
% vector.  In practice, one is interested in sparse portfolios, but
% computing sparse portfolios is NP-hard.
\begin{equation}
  \begin{array}{rll}
    \max & \bfmu^{\top}\bx - \lambda \norm{\bx}_{1}\\
    \text{s.t.} & \rho_{k}(\bL_{k}\bx) \leq \alpha_{k}, & k=1,\cdots,m,\\
    & \ones^{\top}\bx =1, \\
    & \norm{\bx}_\infty \leq B,\\
  \end{array}
  \label{eq:spectralrisk}
\end{equation}
where $\lambda \geq  0$ is the parameter controlling the sparsity of the portfolio, $\alpha_k$ is the risk budget in the $k$-th risk model, the $\ell_{\infty}$-norm is defined as
$\norm{x}_{\infty} = \max_{1\leq i \leq n} \abs{x_i}$, and the bound
$B>0$ controls the leverage of the portfolio. There are two additional
interpretations for the  $\ell_1$-norm regularization in
\eqref{eq:spectralrisk}. Since $\ones^\top \bx = 1$, the 
$\ell_1$-norm $\norm{\bx}_1 = \sum_{i:x_i > 0} x_i - 
\sum_{i:x_i < 0} x_i = 1 - 2\sum_{i:x_i < 0} x_i$, and therefore,
penalizing the $\ell_1$-norm  is equivalent to penalizing short
positions~\citep{demiguel2009generalized}. Penalizing the $\ell_1$-norm of  
the portfolio also helps improve the out-of-sample 
performance of the portfolio in the presence of parameter estimation 
errors~\citep{demiguel2009generalized}.  In practice, the 
parameter $\lambda$ is 
chosen by cross-validation~\citep{demiguel2009generalized}
on the particular desired performance.
In this paper, we are agnostic to the portfolio 
manager's reasons for penalizing the $\ell_1$ norm of the portfolio 
--controlling transaction costs, constraining short sales, or improving 
out-of-sample performance of the portfolio. Therefore, we set $\lambda = \sfrac{2 \abs{\bfmu^\top
		\bx^\ast}}{\norm{\bx^\ast}}$ where 
$\bx^\ast \in \argmax\{\bfmu^\top 
\bx:\ones^{\top}\bx=1,\;\norm{\bx}_{\infty}\leq
B\}$ to ensure that the two terms in the objective are always comparable.
The numerical results reported in Section~\ref{sec:results}
clearly show that the running time of \algo\ is not dependent on the value of 
$\lambda$.

The solution method that we develop in Section~\ref{sec:algo}
is also able to solve the following portfolio selection problems:
\begin{enumerate}[(a)]
\item Sparse weighted mean-spectral risk portfolio selection problem
  \eq
    \begin{array}{rl}
      \max & \bfmu^{\top}\bx - \lambda \norm{\bx}_{1} -
      \sum\limits_{k=1}^{m}\theta_{k}\rho_{k}(\bL_{k}\bx) \\ 
      \text{s.t.} & \ones^{\top}\bx =1, \\
      & \norm{\bx}_\infty \leq B,\\
    \end{array}
    \label{eq:weightedproblem}
  \en
  where $\bftheta \in \reals^{m}_{+}$ is a vector of weights.
\item Sparse mean-max spectral risk portfolio selection problem
  \eq
    \begin{array}{rl}
      \max & \bfmu^{\top}\bx - \lambda \norm{\bx}_{1} - \theta\left(
        \max\limits_{k=1,\cdots,m}\rho_{k}(\bL_{k}\bx) \right)\\ 
      \text{s.t.} & \ones^{\top}\bx =1,\\
      & \norm{\bx}_\infty \leq B,\\
    \end{array}
    \label{eq:maxproblem}
  \en
  where $\theta\geq 0$ is a penalty on the maximum spectral risk measure.
\end{enumerate}

From the dual representation~\eqref{eq:ES-dual} of ES, it follows that the portfolio selection
problem~\eqref{eq:spectralrisk} can be reformulated as
\eq
  \begin{array}{rll}
    \max & \bfmu^{\top}\bx - \lambda \norm{\bx}_{1}\\
    \text{s.t.} & \sum\limits_{\ell = 1}^{d_k} \gamma_{k\ell} \left(z_{k\ell} +
      \frac{1}{(1-\beta_{k\ell})N_k} \sum\limits_{j = 1}^{N_k}\left((\bL_k \bx)_{j}
        - z_{k\ell}\right)^+ 
    \right) \leq \alpha_{k}, & k=1,\cdots,m,\\ 
    & \ones^{\top}\bx =1,\\
    & \norm{\bx}_\infty \leq B,\\
  \end{array}
\en
where $(\bL_k\bx)_{j}$ denotes the $j$-th component of the vector
$\bL_k\bx \in \reals^{N_k}$.
% , and we use the fact that $\rho_{k}(\bL_k\bx) = \sum_{\ell =
%   1}^{d_k}\gamma_{k\ell} \ES{\beta_{k\ell}}{\bL_k\bx}$.
By introducing new variables $y_{jk\ell} = \left((\bL_k \bx)_j -
  z_{k\ell}\right)^+$, and $\xi_i = \abs{x_i}$, the above optimization
problem can be reformulated as the
LP
\begin{equation}
\begin{array}{rll}
  \max & \bfmu^{\top}\bx - \lambda \ones^\top\bfxi\\
  \text{s.t.} & \sum\limits_{\ell = 1}^{d_k} \gamma_{k\ell} \left(z_{k\ell} +
    \frac{1}{(1-\beta_{k\ell})N_k} \sum\limits_{j = 1}^{N_k}  y_{jk\ell} 
  \right) \leq \alpha_{k}, & k=1,\cdots,m,\\ 
  & y_{jk\ell} \geq (\bL_kx)_j - z_{k\ell}, \quad \quad \;
  j = 1, \ldots, N_k, & \ell = 1, \ldots, d_k, \quad k = 1, \ldots,m,\\
  & \bfxi \geq \bx, \quad \quad \; \bfxi \geq - \bx,\\
  & \ones^{\top}\bx =1, \quad \norm{\bx}_\infty \leq B,\\
  & \by \geq 0.
\end{array}
  \label{eq:portfolio-lp}
\end{equation}

Unfortunately, this LP is typically very large. For example, when each
generalized risk measure $\rho_k$ has $d$ ES components,
and the number of samples $N_k$ is equal to $N$ for each $k$, the
LP~\eqref{eq:portfolio-lp} has $\cO(mdN + n)$ variables and
constraints. Thus, with $n = 100$ assets, $m = 5$ risk constraints, each
with $d = 3$ ES components, and $N = 10,000$ samples, the LP has
$150,100$ variables even though the original portfolio selection
problem has only $n = 100$ variables! In addition, at any optimal
solution a very large fraction of the $y_{jkl}$ variables are zero;
consequently, the LP is very ill-conditioned. Large, ill-conditioned
LPs are extremely hard to solve in practice. In
Section~\ref{sec:results} we give empirical evidence supporting this
claim.

\section{Spectral risk constrained portfolio selection algorithm}\label{sec:algo}
  
In this section, we propose a fast iterative algorithm \algo~for
computing a solution to \eqref{eq:spectralrisk}
% that directly solves a penalty formulation for
% \eqref{eq:spectralrisk}
without introducing any new variables. Our goal is to be able to scale
\algo~ to solve very large scale portfolio selection problems;
therefore, we restrict ourselves to gradient descent
algorithms. \algo~ is an application of the proximal gradient
algorithm \FISTA~\citep{BeckTeboulle2009} to a suitably defined
``smoothed'' penalty reformulation of \eqref{eq:spectralrisk}.
% Our main methodological contribution in this paper is
In
  Theorem~\ref{thm:penaltyrep} % where
  we establish an explicit value
  for the penalty parameter that guarantees that an $\varepsilon$-optimal
  solution to \eqref{eq:spectralrisk} can be reconstructed from the
  solution to the penalty formulation.
% We need to solve separable convex QPs to compute gradients of the
% smoothed penalty functions and the \FISTA~ iterates.
The numerical results in Section~\ref{sec:results} clearly show that
our algorithm, which solves several small convex QPs, is significantly
faster than the LP formulation that solves one very large LP. \algo\ can be 
viewed as a decomposition algorithm that decomposes
the large LP into a number of small QPs by exploiting the fact that 
its constraints are very loosely coupled, and then smooths the smaller
QPs to improve convergence.

%In the rest of this section, we construct the specific penalty
%formulation that we use. We show how to smooth this penalty term to
%ensure that it has a Lipschitz continuous derivative. Next, we
%describe how to use \FISTA~ to iteratively solve this smoothed penalty
%problem. We also establish an upper bound on the penalty parameter
%that guarantees an $\varepsilon$-optimal solution.

\subsection{Smoothed penalty formulation}
% In this section, we describe a fast iterative algorithm for
% computing a solution of \eqref{eq:spectralrisk}. At each step, this
% algorithm computes the solution of the $\ell_1$-norm constrained
% mean-variance problem of the form \eq
% \begin{array}{rl}
%   \max & \varphi^{\top}\bx - \frac{\rho}{2}\bx^{\top}\bx - \kappa \norm{\bx}_{1}\\
%   \text{s.t.} & \ones^\top \bx = 1,\\
%   & \norm{\bx}_{\infty} \leq B.
% \end{array}
% \en We show that this problem can be solved by a simple search.  In
% practice, the portfolio selection problem is likely to have other
% linear constraints $\bC\bx \leq \bc$. In that case, the iterative
% algorithm has to solve a $\ell_1$-norm constrained mean-variance
% problem \eq
% \begin{array}{rl}
%   \max & \varphi^{\top}\bx - \frac{\rho}{2}\bx^\top\bx - \kappa \norm{\bx}_{1}\\
%   \text{s.t.} & \ones^\top \bx = 1,\\
%   & \bC \bx \leq \bd,\\
%   & \norm{\bx}_{\infty} \leq B.
% \end{array}
% \en Note that the dimension of the problem is determined by the size
% of the portfolio $\bx$ -- the number of risk models, and the samples
% in each risk model do not affect the size of the problem.  Suppose
% the cardinality of the portfolio $\bx$ is not of concern. Then this
% problem reduces to a very simple mean-variance portfolio selection
% where the assets are completely uncorrelated. Thus, it follows that
% if the portfolio manager has access to a mean-variance solver, she
% can use our method to compute mean-spectral risk optimal portfolios.
The portfolio selection problem~\eqref{eq:spectralrisk} is clearly
equivalent to the problem \eq
\begin{array}{rll}
  \max & \bfmu^{\top}\bx - \lambda \norm{\bx}_{1}\\
  \text{s.t.} & \max_{1 \leq k \leq m}\left\{\rho_{k}(\bL_{k}\bx) -
    \alpha_{k}\right\} \leq 0, & k=1,\cdots,m,\\
  & \ones^{\top}\bx =1, \\
  & \norm{\bx}_\infty \leq B.\\
\end{array}
\en An exact penalty formulation of this optimization problem is given
by \eq
\begin{array}[t]{rl}
  \min & \eta\left(\lambda \norm{\bx}_{1}  - \bfmu^{\top}\bx\right) +  \left(\max_{1
      \leq k \leq m}\left\{\rho_{k}(\bL_{k}\bx) - \alpha_{k}\right\}\right)^+\\
  \text{s.t.} &  \ones^{\top}\bx =1, \\
  & \norm{\bx}_\infty \leq B,
\end{array}
\en where $\eta$ denotes the penalty parameter. We will find it
convenient to scale the objective by $\eta$ instead of scaling the
penalty term. Let us express the maximum of $m+1$ values, $t_1, \ldots, t_{m+1}$, as $\Psi(t_{1},\cdots,t_{m+1})  = \max_{\bu}\left\{\bt^{\top}\bu: \ones^{\top}\bu = 1, \bu \geq \mathbf{0}\right\}$, and define $g(\bx) = \Psi(\rho_{1}(\bL_{1}\bx)-\alpha_{1}, \ldots,
  \rho_{m}(\bL_{m}\bx)-\alpha_{m},0)$.
Then, the above exact penalty formulation can be written as
\begin{equation}
  \label{eq:penalty_opt}
  G(\eta) =   \begin{array}[t]{rl}
    \min & \eta\left(\lambda \norm{\bx}_{1}  - \bfmu^{\top}\bx\right) + g(\bx)\\  
    \text{s.t.} & \ones^{\top}\bx =1, \\
    & \norm{\bx}_\infty \leq B.\\
  \end{array}
\end{equation}

We expect that the solution to~(\ref{eq:penalty_opt}) will converge
to a solution to~(\ref{eq:spectralrisk}) as $\eta
\rightarrow 0$. The next result establishes
this claim and shows that there exists a lower bound $\eta^\ast$ for the penalty parameter that
guarantees that one can construct an $\varepsilon$-optimal solution for
(\ref{eq:spectralrisk}) from an $\varepsilon$-optimal solution to an
appropriately smoothed version of $G(\eta^\ast)$.
% Next, we show that we do not have to set $\eta$ arbitrarily small in
% order to ensure $\varepsilon$-optimality. In fact, we can guarantee
% $\varepsilon$-optimality for a finite $\eta$.
\begin{theorem}[Penalty Representation]
  \label{thm:penaltyrep}
  Suppose there exists a portfolio $\bz$, $\ones^\top \bz = 1$,
  $\norm{\bz}_{\infty} \leq B$, such that $\bz$ strictly satisfies all
  the generalized spectral risk constraints, i.e. $\rho_k(\bL_k \bz) <
  \alpha_k$, for $k = 1, \ldots, m$. Define $g_{\max}(\bx) = \max_{1 \leq k \leq
      m}\{ \rho_k(\bL_k\bx)- \alpha_k\}$.
  Let $P_u$ denote any upper bound on the optimal value $P^\ast$ of
  the spectral risk portfolio selection
  problem~(\ref{eq:spectralrisk}). Suppose $\overline{\bx}$ is
  an $\varepsilon$-optimal solution to the penalized
  problem~(\ref{eq:penalty_opt}) with 
  \eq \eta^\ast =
  \frac{\abs{g_{\max}(\bz)}}{P_u - (\bfmu^\top\bz - \lambda
    \norm{\bz}_1)}. \en
  Then, 
  \eq \hat{\bx} =
  \frac{1}{1+\theta} \cdot \overline{\bx} + \frac{\theta}{1+\theta}
  \cdot \bz \en
  is an $\varepsilon$-optimal solution to the spectral risk
  portfolio selection problem~\eqref{eq:spectralrisk}, where $\theta =
      \max\left\{\sfrac{g_{\max}(\overline{\bx})}{\abs{g_{\max}(\bz)}}, 0
      \right\}$.
\end{theorem}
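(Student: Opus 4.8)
The plan is to verify two properties of $\hat{\bx}$: that it is feasible for the spectral risk portfolio selection problem~\eqref{eq:spectralrisk}, and that its objective value is within $\varepsilon$ of the optimum $P^\ast$. Both rest on the fact that $\hat{\bx}$ is a convex combination of $\overline{\bx}$ and the strictly feasible point $\bz$: writing $a = \frac{1}{1+\theta}$ and $b = \frac{\theta}{1+\theta}$, we have $a,b \geq 0$ (since $\theta \geq 0$ by the outer maximum with $0$) and $a+b = 1$, so $\hat{\bx} = a\overline{\bx} + b\bz$. As both $\overline{\bx}$ and $\bz$ satisfy $\ones^\top\bx = 1$ and $\norm{\bx}_\infty \leq B$, and this set is convex, $\hat{\bx}$ inherits these constraints automatically; only the $m$ risk constraints require work.

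First I would establish feasibility of the risk constraints, equivalently $g_{\max}(\hat{\bx}) \leq 0$. The function $g_{\max}$ is convex, being the pointwise maximum of the maps $\bx \mapsto \rho_k(\bL_k\bx) - \alpha_k$, each a coherent risk measure composed with a linear map. Jensen's inequality gives $g_{\max}(\hat{\bx}) \leq a\,g_{\max}(\overline{\bx}) + b\,g_{\max}(\bz)$. If $g_{\max}(\overline{\bx}) \leq 0$ then $\theta = 0$, $\hat{\bx} = \overline{\bx}$, and feasibility is immediate. Otherwise $\theta = g_{\max}(\overline{\bx})/\abs{g_{\max}(\bz)} > 0$; substituting $g_{\max}(\bz) = -\abs{g_{\max}(\bz)}$ and this value of $\theta$, the bracketed term $g_{\max}(\overline{\bx}) - \theta\abs{g_{\max}(\bz)}$ collapses to zero, so $g_{\max}(\hat{\bx}) \leq 0$. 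This cancellation is exactly what the definition of $\theta$ is engineered to produce.

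Second, for the objective bound, I would use concavity of $f(\bx) = \bfmu^\top\bx - \lambda\norm{\bx}_1$ to get $f(\hat{\bx}) \geq a f(\overline{\bx}) + b f(\bz)$. Since the feasible set of~\eqref{eq:spectralrisk} is compact and contains $\bz$, an optimizer $\bx^\star$ with $f(\bx^\star) = P^\ast$ exists; being feasible it has $g(\bx^\star) = 0$, so its penalized value is $-\eta^\ast P^\ast$ and hence $G(\eta^\ast) \leq -\eta^\ast P^\ast$. The $\varepsilon$-optimality of $\overline{\bx}$ then gives $-\eta^\ast f(\overline{\bx}) + g(\overline{\bx}) \leq -\eta^\ast P^\ast + \eta^\ast\varepsilon$, i.e. $f(\overline{\bx}) \geq P^\ast + g(\overline{\bx})/\eta^\ast - \varepsilon$. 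In the nontrivial case $g(\overline{\bx}) = g_{\max}(\overline{\bx}) = \theta\abs{g_{\max}(\bz)}$, and the defining relation $\abs{g_{\max}(\bz)} = \eta^\ast(P_u - f(\bz))$ turns this into $g(\overline{\bx})/\eta^\ast = \theta(P_u - f(\bz))$. Substituting into $f(\hat{\bx}) \geq \frac{f(\overline{\bx}) + \theta f(\bz)}{1+\theta}$, the $\theta f(\bz)$ terms cancel, yielding $f(\hat{\bx}) \geq \frac{P^\ast + \theta P_u - \varepsilon}{1+\theta}$; finally $P_u \geq P^\ast$ gives $f(\hat{\bx}) \geq P^\ast - \frac{\varepsilon}{1+\theta} \geq P^\ast - \varepsilon$.

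I expect the objective bound to be the crux. The feasibility step is a short convexity computation once the role of $\theta$ is apparent, but the objective estimate requires tracking how the specific value $\eta^\ast = \abs{g_{\max}(\bz)}/(P_u - f(\bz))$ makes the deficit from mixing in the conservative point $\bz$ exactly offset the constraint-violation slack that the penalty credits to $\overline{\bx}$. The points demanding care are the bookkeeping of this cancellation, the reconciliation of the $\eta^\ast$-scaling of the penalized objective with the accuracy $\varepsilon$ required for~\eqref{eq:spectralrisk} (equivalently, measuring the penalty accuracy in return units, so that the gap $\varepsilon$ enters as $\eta^\ast\varepsilon$ in raw penalty units), and the separate treatment of the degenerate case $\theta = 0$, where $\hat{\bx} = \overline{\bx}$ is already feasible and the bound follows directly.
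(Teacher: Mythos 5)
Your proof is correct and is essentially the argument the paper relies on: the paper gives no inline proof but defers to Theorem~2 of \cite{IyengarPS11:packing}, whose proof is exactly your scheme --- convexity of $g_{\max}$ plus the engineered $\theta$ to get feasibility of the convex combination $\hat{\bx}$, and concavity of $\bfmu^\top\bx - \lambda\norm{\bx}_1$ plus the specific choice of $\eta^\ast$ so that the $f(\bz)$ terms cancel and $P_u \geq P^\ast$ yields the $\varepsilon$-bound, with the $\theta = 0$ case handled trivially. One remark worth keeping: your reading of ``$\varepsilon$-optimal for the penalized problem'' in return units (i.e., objective gap at most $\eta^\ast\varepsilon$ in the $\eta$-scaled formulation, equivalently gap $\varepsilon$ for $-f(\bx) + g(\bx)/\eta^\ast$) is the convention under which the theorem as stated is true --- under the raw convention one only obtains $\sfrac{\varepsilon}{\eta^\ast(1+\theta)}$-optimality --- and you were right to flag that this is where the statement's implicit normalization lives.
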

\begin{proof}
  The proof is identical to that of Theorem~2 in
  \cite{IyengarPS11:packing}.
\end{proof}

% \noindent In practice, we stop whenever
% $\frac{\norm{\bx^{(j)}-\bx^{(j-1)}}_{2}}{\norm{\bx^{(j-1)}_{2}}} <
% \varepsilon$, and the iterate $\bx^{(j)}$ is $\varepsilon$-feasible,
% i.e. $g_{\max}(\bx) \leq \epsilon$.
%Finally, note that for fixed $\eta$,~(\ref{eq:penalty_opt}) is
%equivalent to ~(\ref{eq:maxproblem}). Thus, it is immediately obvious
%that our method will also solve~(\ref{eq:maxproblem}).

We would like to use a gradient-based algorithm to solve
problem~(\ref{eq:penalty_opt}). However, both $\Psi$ and the spectral
risk measure $\rho$ are non-smooth functions of their argument;
consequently, $g(\bx) = \Psi(\rho_{1}(\bL_{1}\bx)-\alpha_{1}, \ldots ,
\rho_{m}(\bL_{m}\bx)-\alpha_{m},0)$ is a non-smooth function of the
portfolio $\bx$.  We use a smooth approximation
$g_{\nu\delta}(\bx)$ to the function $g(\bx)$ such that $g(\bx) -
\nu - \delta \leq g_{\nu\delta}(\bx) \leq g(\bx)$. The details
of the construction of $g_{\nu\delta}$ are given in Appendix
\ref{app:smoothing}.  By replacing $g(\bx)$
in~\eqref{eq:penalty_opt} with $g_{\nu\delta}(\bx)$, we obtain the
following smooth optimization problem:
%\begin{equation}
\eq
  G_{\nu\delta}(\eta) = \begin{array}[t]{rl}
    \min & \eta\left(\lambda \norm{\bx}_{1} - \bfmu^{\top}\bx\right) +
    g_{\nu\delta}(\bx)  \\ 
    \text{s.t.} & \ones^{\top}\bx =1, \\
    & \norm{\bx}_\infty \leq B.\\
  \end{array}
%  \label{eq:penalty_opt_smooth}
%\end{equation}
\en
Since the scenario-based spectral risk portfolio selection problem is
itself an approximation to the stochastic optimization problem where
the distribution of the loss $\tilde{\bL}$ is known, one does not expect to
solve these problems to very high accuracy, i.e. a solution error of the
order of $10^{-12}$. In practice, error of the order of $10^{-3}$
is sufficient. 
%  solving
% the spectral risk portfolio selection
% problem~(\ref{eq:spectralrisk}) to very high accuracy is not of
%   practical importance, i.e. replacing the $g(\bx)$ by
% $g^{\nu\delta}(\bx)$, for appropriately chosen small
%   values for $\nu$ and $\delta$, is unlikely to make a difference in
% practice.
Therefore, solving the smoothed problem for appropriately
chosen values of $\nu$ and $\delta$ is sufficient for most practical
instances. Moreover, in Section~\ref{sec:results} we show that the smoothing significantly improves the
computational tractability of this problem.

\subsection{First-order proximal gradient algorithm}
\label{sec:conv}
\algo~ is displayed in Algorithm~\ref{fig:algorithm}.
\algo~computes an $\varepsilon$-optimal solution
for the spectral risk constrained portfolio selection problem
\eqref{eq:spectralrisk} by approximately solving a sequence of
smoothed penalty problems $G_{\nu\delta}(\eta)$ for a decreasing
sequence of $\eta$. We begin with $\eta \gets \eta_0$ and then progressively
reduce $\eta \gets c_{\eta}\eta$, where $c_{\eta}<1$. This
continuation scheme ensures that \algo~ is able to take large steps
when the iterates are far from optimality.  In
Theorem~\ref{thm:penaltyrep} we showed that there exists $\eta^\ast >
0$ such that we can recover an $\varepsilon$-optimal solution for
(\ref{eq:spectralrisk}) by solving $G_{\nu\delta}(\eta^\ast)$, i.e. we
do not have to drive $\eta$ all the way to zero.  This feature adds
stability to \algo~ since the numerical accuracy required to solve $G_{\nu \delta} (\eta)$ increases as
$\eta \searrow 0$ (see e.g. \citet{nocedal1999numerical}).
\noindent In practice, we stop whenever the relative change in iterate $\bx^{(j)}$ is
smaller than the tolerance $\varsigma$,
and the iterate $\bx^{(j)}$ is $\varsigma$-feasible,
i.e. $g_{\max}(\bx^{(j)}) \leq \varsigma$.  \algo~ calls \FISTA~to
approximately solve $G_{\nu\delta}(\eta)$ for a fixed value of
$\eta$. \FISTA~is a proximal gradient method, i.e. a gradient
descent algorithm with an additional proximal term to control the step
length. The parameter $\tau$ controls the accuracy demanded by
\FISTA. We need $\tau \searrow 0$ to ensure that the accuracy is
increased as $\eta \searrow 0$.

% however, we terminate the algorithm early using a relative error
% criterion -- see Line~\ref{line:terminate} in
% Figure~\ref{fig:algorithm}.

\begin{algorithm}[tp]
  \begin{algorithmic}[1]
    \STATE $\eta \gets \eta_0$
    \STATE $\tau \gets \tau_0$
    \STATE $C \gets 1$
    \STATE $\bx \gets \frac{1}{n}\ones$
    \REPEAT
    	\STATE $\hat{\bx} \gets \bx$
    	\STATE $(\bx, C) \gets \FISTA(\hat{\bx}, C, \eta, \tau, \nu, \delta)$
    	\STATE $\eta \gets c_{\eta}\eta$
    	\STATE $\tau \gets c_{\tau}\tau$
    \UNTIL{$\left(\norm{\bx-\hat{\bx}}_{2}/\norm{\hat{\bx}}_2 <
        \varsigma \right)$ \AND $\max_{1\leq k \leq m}\{
      \rho_k(\bL_{k}\bx)-\alpha_{k}\} <
      \varsigma$} \label{line:terminate}
    \RETURN $\bx$
  \end{algorithmic}
  \caption{Algorithm \textsc{SpecRiskAllocate}($\eta_0,c_{\eta},\tau_0,c_{\tau},\nu,\delta,\varsigma$)}
  \label{fig:algorithm}
\end{algorithm}

\begin{algorithm}[tp]
  \begin{algorithmic}[1]
	\STATE $\zeta \gets 1.5$
    \STATE $t \gets 1$
    \STATE $\by \gets \bx$
    \REPEAT
    	\STATE $\hat{\bx} \gets \bx$
    	\STATE $\hat{t} \gets t$
    	\STATE $\bfxi \gets \textsc{ComputeGradient}(\by, \nu, \delta)$
    	\REPEAT \label{line:begin-backtracking}
   	 		\STATE $\bx \gets \argmin \left\{ \eta\lambda\norm{\bz}_{1} +
      \bfxi^\top(\bz-\by) + \frac{C}{2}\norm{\bz-\by}_2^2:
      \ones^{\top}\bz=1, \; \norm{\bz}_{\infty}\leq B \right\}$
      		\STATE $F \gets - \eta\bfmu^{\top}\bx + \eta\lambda\norm{\bx}_{1} +
    g_{\nu\delta}(\bx)$
    		\STATE $Q \gets \eta\lambda\norm{\bx}_{1} -
    \eta\bfmu^{\top}\by + g_{\nu\delta}(\by) + \bfxi^\top(\bx-\by) +
    \frac{C}{2}\norm{\bx-\by}_2^2 $
		    \STATE $C \gets C \zeta$
    	\UNTIL{$F < Q$} \label{line:end-backtracking}
    	\STATE $C \gets C / \zeta$
    	\STATE $t \gets \frac{1+\sqrt{1+4\hat{t}^{2}}}{2}$
    	\STATE $\by \gets \bx + \frac{\hat{t}-1}{t}(\bx-\hat{\bx})$
   	\UNTIL $\Big(\norm{\bx-\hat{\bx}}_2/\norm{\hat{\bx}}_2\Big) \leq \tau$
    \RETURN $(\bx,C)$
  \end{algorithmic}
  \caption{Function \FISTA($\bx$, $C$, $\eta$, $\tau$, $\nu$, $\delta$)}
  \label{fig:FISTA}
\end{algorithm}

Next we describe some of the essential features of \FISTA.  We refer
the reader to
\cite{BeckTeboulle2009} for the details of the algorithm.  The
particular implementation of \FISTA~that we employ is displayed in
Algorithm~\ref{fig:FISTA}.
% in Appendix~\ref{app:specriskalgo}.
\FISTA~computes an approximate solution to $G_{\nu\delta}(\eta)$ by
iteratively solving a sequence of quadratic optimization problems of
the form
%[6]\begin{equation}
%  \label{eq:FISTA-iter-opt}
%  \begin{array}{rl}
%    \mbox{min} & \eta\lambda\norm{\bx}_1 + h(\bx, \by)\\
%    \mbox{s.t.} & \ones^\top \bx = 1,\\
%    & \norm{\bx}_{\infty} \leq B,
%  \end{array}
%\end{equation}
%where
%\begin{equation}
%  \label{eq:g-quad}
%  h(\bx, \by)  = \big(\eta
%    \bfmu + \nabla 
%      g_{\nu\delta}(\by) \big)^\top(\bx-\by)  + 
%  \frac{C}{2} \|\bx-\by\|_2^2,
%\end{equation}
%and $C$ is a bound on the Lipschitz constant of the gradient $\nabla
%g_{\nu\delta}(\by)$ [6].
\begin{equation}
	\label{eq:FISTA-iter-app}
	\begin{array}{rl}
	\mbox{min} & \eta\lambda \norm{\bx}_1 + \bfxi^\top(\bx-\by) +
	\frac{C}{2} \norm{\bx-\by}_2^2,\\
	\mbox{s.t.} & \ones^\top \bx = 1,\\
	& \norm{\bx}_{\infty} \leq B,
	\end{array}
\end{equation}
where $\bfxi = \nabla \left(-\eta \bfmu^\top \by +
g_{\nu\delta}(\by)\right) = -\eta \bfmu + \nabla
g_{\nu\delta}(\by)$, and $C$
is the Lipschitz constant of the gradient $\bfxi$.
Although one can explicitly compute its value,
it is often the case that the Lipschitz constant $C$ is too large.
In practice, it is more efficient to use a backtracking method to compute $C$.
The function \FISTA~does
backtracking in
lines~\ref{line:begin-backtracking}--\ref{line:end-backtracking} of
Algorithm~\ref{fig:FISTA}.
\FISTA~is guaranteed to converge to an $\varepsilon$-optimal solution in
$\cO(\sfrac{1}{\varepsilon})$ iterations. However, the worst-case bound is
often too conservative in practice. We terminate the \FISTA~iterations
whenever the relative change in the iterates is below a threshold
$\tau$. We make $\tau$ progressively tighter as $\eta$ is decreased.

Let $\by^{(k)}$ denote the current \FISTA~iterate. Since
$-\eta\bfmu^\top\bx + g_{\nu\delta}(\bx)$ is a convex function with a
Lipschitz continuous derivative, it follows that the quadratic
function $\bfxi^\top(\bx-\by)  + 
  \frac{C}{2} \|\bx-\by\|_2^2$
is an upper bound for $-\eta\bfmu^\top\bx + g_{\nu\delta}(\bx)$.
% From~\eqref{eq:grad-g} it follows that we need to solve several
% separable convex QPs in order to compute $\nabla g_{\nu\delta}(\by)$.
%
% In iteration $k+1$, \FISTA~ solves the optimization problem It is
% important to note that $h(\bx;\by^{(k)})$ is an {\em upper} bound on
% $g_{\nu\delta}(\bx)$.
This ensures that the improvement in the true objective at the new
iterate $\by^{(k+1)}$ is at least as large as that predicted by the
quadratic approximation~\eqref{eq:FISTA-iter-app}. The quadratic
approximation~\eqref{eq:FISTA-iter-app} 
%is a trade-off between having some
%second-order information and full Hessian information. (\ref{eq:g-quad})
only uses the first-order gradient information. 
Therefore, the algorithm used to solve $G_{\nu\delta}(\eta)$ can be scaled 
to much larger problem sizes, and is also
considerably more stable as the problem size increases, as compared to a 
full-fledged quadratic approximation that uses all the Hessian information; 
however, at
the cost of a larger iteration count. 
%[6] In practice, the Lipschitz
%constant $C$ is not known, and one has to use backtracking
%to estimate $C$. See Appendix~\ref{app:specriskalgo} for details.
%
Finally, note that \eqref{eq:FISTA-iter-app} is equivalent to
%[6]\begin{equation}
%  \label{eq:FISTA-mv}
%  \begin{array}{rl}
%    \mbox{min} & \eta\lambda \norm{\bx}_1 + \left(\eta \bfmu + \nabla
%      g_{\nu\delta}(\by) - C \by \right)^\top \bx +
%    \frac{C}{2} \bx^\top\bx\\
%    \mbox{s.t.} & \ones^\top \bx = 1,\\
%    & \norm{\bx}_{\infty} \leq B,
%  \end{array}
%\end{equation}[6]
\eq
\label{eq:FISTA-mv}
\begin{array}{rl}
	\mbox{min} & \eta\lambda \norm{\bx}_1 + (\bfxi - C \by)^\top \bx +
	\frac{C}{2} \bx^\top\bx,\\
	\mbox{s.t.} & \ones^\top \bx = 1,\\
	& \norm{\bx}_{\infty} \leq B,
\end{array}
\en	
i.e. the \FISTA~iterates are computed by solving an $\ell_1$-penalized
separable convex QP with the number of decision variables equal to the
number of assets.  Thus, this problem can be solved very efficiently
if one has access to a mean-variance solver.  In
Appendix~\ref{app:specriskalgo} we show how to solve this problem
using a single one-dimensional search.
In practical instances, where it is likely that the portfolio selection problem 
has additional linear constraints, the portfolio manager can use the 
mean-variance or quadratic solver to compute the \FISTA~iterates.
In Appendix~\ref{app:specriskalgo}, we also show how to compute the gradient 
$\bfxi$ using $\sum_{k=1}^m d_k+1$ one-dimensional searches.

\section{Numerical results}\label{sec:results}

In this section we present numerical experiments that show the advantage of
\algo~over the LP formulation when dealing with large instances of the
spectral risk constrained portfolio selection problem.
Next, we illustrate the convenience of considering several risk models to 
overcome the uncertainty in risk parameters when selecting a portfolio 
to hedge the risk of an existing one.

\subsection{Ill-Conditioning and Problem Scaling Results}
\label{sec:scale}
We tested our algorithm on random instances of the spectral risk
constrained portfolio selection problem~(\ref{eq:spectralrisk}). We generated
instances with different values for the number of assets~$n$. The
number of spectral risk constraints was $m=5$ for all instances. For
each spectral risk measure, we fixed the number of ES components to $d=3$.
The number of loss scenarios $N$ was set equal for all risk
models. We randomly generated the expected return percentage
vector $\bfmu$, the scenario-based loss
matrices $\bL_{k}$, the ES weight vectors $\bfgamma_{k}$, and the ES levels
$\bfbeta_{k} \in [0.9,1)^{d}$. The spectral risk budgets
$\alpha_{k}$ were set to $\hat{\alpha}_k -
0.1\abs{\hat{\alpha}_k}$, where $\hat{\alpha}_k$ is the value of the
$k$-th spectral risk measure $\rho_k(\bL_k \hat{\bx})$ at portfolio 
$\hat{\bx} = \sfrac{1}{n} \ones$. We set the leverage bound to $B = 1$, and the 
parameter 
controlling the sparsity of the portfolio either to $\lambda=0$ or $\lambda = 
\lambda^\ast$, where $\lambda^\ast = 
\sfrac{2\abs{\bfmu^{\top}\bx^\ast}}{\norm{\bx^\ast}_{1}}$, and $\bx^\ast = 
\argmax\{\bfmu^\top \bx:\ones^{\top}\bx=1,\;\norm{\bx}_{\infty}\leq
B\}$. For all the instances generated, the value of $\lambda^\ast$ was in
the interval $[0.01,0.03]$.  The \algo~parameters were set as follows
\eq
	\eta_0 = 10, \; c_{\eta} = 0.99, \; \tau_0 = 10^{-4}, \;
c_{\tau} = 0.95, \; \nu = 0.01\min|\alpha_{k}|, \; \delta =
0.01, \; \varsigma = 10^{-2}.  \en
We solved each instance of the spectral risk constrained sparse portfolio selection problem
using a MATLAB implementation of \algo. For each instance, we also
solved the LP formulation~\eqref{eq:portfolio-lp} using the
state-of-the-art LP solver~Gurobi~\citep{gurobi} with an optimality
tolerance of $\varsigma = 10^{-2}$. 
We solved the instances using  Gurobi version 5.0.2 and Gurobi
  version 5.6.0. Our results indicate that, although the performance
of Gurobi has improved significantly from one version to the other, our algorithm
still offers a significant advantage over this state-of-the-art LP solver.
We called Gurobi from MATLAB
using Gurobi's MATLAB interface.  MATLAB was run on a 6-core, 3.07GHz
Intel Xeon processor with 66GB of RAM running the Ubuntu OS.

As mentioned in Section \ref{sec:algo}, the LP
formulation~\eqref{eq:portfolio-lp} is very ill-conditioned. This is
manifested in a high variance in the number of iterations required to
solve similar problems, i.e. with very small perturbations in the
parameter values. 
%  Ill-conditioning of
% an LP can be reflected in a high variance of the number of iterations
% needed to compute an optimal solution to similar (perturbed) problems.
We now show empirically that one does not face this issue when
\eqref{eq:spectralrisk} is solved using~\algo. % ~to solve the smooth
% approximation of equivalent problem \ref{eq:spectralrisk}, we do not
% face this issue. 
We generated a base instance with
$(n,N)=(100,1000)$. Next, we created $S = 100$ perturbed instances by
setting
 % for $k=1,\cdots, m$, perturb 
each entry $\ell_{ijk}^s$ of the loss matrix $\bL_k^s$, corresponding the
to the $s$-th perturbed problem, to  
$\ell_{ijk}^s = \ell_{ijk} + t\abs{\ell_{ijk}}\varepsilon_{ijk}^s$, where $t \in
\{0.05, 0.1\}$ % by adding
% to it, where 
and $\varepsilon_{ijk}^s$ are I.I.D. standard Normal random variables.
% IID standard Normal random variables, and $t \in \{5,
% 10\}\%$. 
% Maintaining all other inputs unchanged, we generated 100
% perturbed problems for each $t$ and solved them using Gurobi and
% \algo. 
Table~\ref{tab:illcond} shows the mean $\mu_S$ and the standard deviation
$\sigma_S$ of the number of iterations required by Gurobi and by
\algo~(total \FISTA~iterations, in this case) to solve the $S=100$
perturbed instances. Table~\ref{tab:illcond} also shows the coefficient of
variation $\sfrac{\sigma_S}{\mu_S}$ of the number of iterations needed to
solve the perturbed instances. The number of iterations required by \algo\
has a coefficient of variation of less 
than $1$\%, where the same number for Gurobi 5.0.2 (resp. Gurobi 5.6.0) is
approximately $158$\% (resp. $11$\%).  It is clear that  the
ill-conditioning is completely resolved by \algo.

\begin{table}[tbp]
  \centering
  \footnotesize
  \begin{tabular}[h]{|c|c|r@{.}lr@{.}lr@{.}l|}
    \hline
    perturbation $t$ & solver & \multicolumn{2}{c}{$\mu_S$} &
    \multicolumn{2}{c}{$\sigma_S$} & \multicolumn{2}{c|}{$\sigma_S/\mu_S$}
    \\  
    \hline    	
    0.05 & Gurobi 5.0.2 & 132&53 & 202&75 & 1&5298 \\
    0.05 & Gurobi 5.6.0 & 111&77 &    4&30 & 0&0385 \\
    0.05 & \algo        &  82&12 &    0&41 & 0&0050 \\
    \hline	
    0.10 & Gurobi 5.0.2 & 107&14 & 169&25 & 1&5797 \\
    0.10 & Gurobi 5.6.0 & 118&65 &  13&59 & 0&1145 \\
    0.10 & \algo        &  81&96 &   0&75 & 0&0092 \\
    \hline
  \end{tabular}
  \caption{Mean, standard deviation, and coefficient of variation of the number of iterations needed to solve 100 perturbed problems. Variance is much higher in the Gurobi case than in the \algo~case due to ill-conditioning of the LP formulation.} 
  \label{tab:illcond}
\end{table}

In Section~\ref{sec:algo}, we argued that the number of constraints and
variables in LP~\eqref{eq:portfolio-lp} is very large. Consequently,
we expect the time to solve large instances 
% huge in the number of variables and constraints. Hence, instances with
% a large, though not very large, number of scenarios 
using the LP formulation to be high. In contrast, we expect \algo~to be able to solve large
instances in a very reasonable amount of time. To support these claims,
we generated $10$ random instances for each pair of parameters $(n,N)$ 
and solved them with the sparsity parameter $\lambda$ set equal to 
$\lambda^\ast$ or $0$.
Table~\ref{tab:results} reports the results for this problem scaling study.
The column labeled ``err'' lists the mean relative error of the optimal
value found by \algo~with respect to the one found by Gurobi. 
For all but the smallest-sized problem, i.e. $(n,N) = (10,100)$, \algo\ found
a solution with an objective value within $0.5\%$ of the optimal 
value, and an optimal solution for $7$ out of the $11$ problems parameterized
by $(n,N)$. % Even in the only exception,
% the objective value of the solution found by \algo~is, in average, only $3.1\%$ suboptimal.
For each instance, we set a maximum solution time limit of $1$ hour. 
The columns labeled ``limit''
list the number of instances that could not be solved within the time limit. The columns labeled
``time(s)'' list the average run time in seconds, where we have included a
run time of $3600$ seconds for those instances that reached the solution
time limit. Note that for three of the largest-sized problems, namely
$(n,N) \in \{(100,15000), (1000,10000),(1000,15000)\}$, Gurobi was unable to 
solve at
least $1$ instance and up to $9$ out 
of $10$ instances within the time limit. Although the running time of  Gurobi 5.6.0 shows a
remarkable 
improvement for smaller problems, it still has trouble
solving the instances corresponding to the two largest parameter
values. In contrast, \algo~is able to solve 
\emph{all} the problem instances at least an order of magnitude faster
than Gurobi. 
Note that the run time reported for Gurobi does not include the time
required to set up the LP. 
Note also that, when the sparsity parameter $\lambda = 0$, \algo\
is slower than Gurobi on  the smaller instances, but faster on the largest
instances; moreover, in contrast with Gurobi, \algo\ is able to solve \emph{all} the
instances in less than an hour. \algo\ is slower in this case because the
stopping criterion in subroutine FISTA~(see Algorithm~\ref{fig:FISTA}) is
harder to achieve when we do not regularize the portfolio by penalizing
its $\ell_1$-norm. We believe that changing the FISTA stopping criterion
to one better suited for the non-regularized problem, will significantly improve the
running time. 

The run times reported in Table~\ref{tab:results} are for the version
of \algo~that solves the constrained QP subproblems using an iterative line search.
In typical
applications, the portfolio selection problem is likely to have other
side constraints, and it is unlikely that one would be able to solve
the QP subproblems in this manner. In order to ensure that the
run times are not an artifact of the simple feasible set, we also
tested an implementation of \algo~where the QP step (and
also the gradient computation step) were solved using the quadratic
programming solver in Gurobi. The run times for this alternative
implementation were similar to those reported in
Table~\ref{tab:results}.

\begin{table}[tbp]
  \centering
  \footnotesize
  \begin{tabular}[h]{|c|rr|rrrrrrr|}
    \hline    
    \multirow{2}{*}{$\lambda$} & \multirow{2}{*}{$n$} & \multirow{2}{*}{$N$} & err &
    \multicolumn{2}{c}{Gurobi 5.0.2} & \multicolumn{2}{c}{Gurobi 5.6.0} &
    \multicolumn{2}{c|}{\algo} \\
    & & & (\%) & limit & time(s) & limit & time(s) & limit & time(s)\\
    \hline
    \multirow{11}{*}{$\lambda^\ast$} & 10 & 100 & 3.1 & -- & 0.01 & -- & 0.02 & -- & 0.12 \\
     & 10 & 500 & -- & -- & 0.51 & -- & 0.12 & -- & 0.24 \\
     & 10 & 1000 & 0.1 & -- & 1.54 & -- & 0.24 & -- & 0.29 \\
     & 10 & 1500 & -- & -- & 0.5 & -- & 0.48 & -- & 0.6 \\
     & 100 & 1000 & -- & -- & 4.61 & -- & 2.93 & -- & 3.21 \\
     & 100 & 5000 & 0.1 & -- & 230.37 & -- & 18.96 & -- & 14.19 \\
     & 100 & 10000 & -- & -- & 497.36 & -- & 54.33 & -- & 15.73 \\
     & 100 & 15000 & -- & -- & 98.38 & -- & 98.58 & -- & 67.7 \\
     & 1000 & 5000 & 0.1 & 1 & 943.61 & -- & 232.74 & -- & 63.6 \\
     & 1000 & 10000 & -- & 1 & 1050.15 & 1 & 1199.99 & -- & 247.44 \\
     & 1000 & 15000 & -- & 6 & 2538.93 & 5 & 2238.47 & -- & 440.07 \\
     \hline
     \multirow{11}{*}{$0$} & 10 & 100 & 0.2 & -- & 0.01 & -- & 0.01 & -- & 0.25 \\
     & 10 & 500 & 0.5 & -- & 0.27 & -- & 0.07 & -- & 0.28 \\
     & 10 & 1000 & -- & -- & 0.13 & -- & 0.13 & -- & 0.39 \\
     & 10 & 1500 & 0.2 & -- & 0.27 & -- & 0.22 & -- & 0.57 \\
     & 100 & 1000 & -- & -- & 1.61 & -- & 1.55 & -- & 13.76 \\
     & 100 & 5000 & -- & -- & 133.78 & -- & 10.61 & -- & 42.77 \\
     & 100 & 10000 & -- & -- & 26.02 & -- & 25.56 & -- & 94.97 \\
     & 100 & 15000 & -- & -- & 41.8 & -- & 40.88 & -- & 68.79 \\
     & 1000 & 5000 & -- & -- & 210.29 & -- & 93.45 & -- & 142.00 \\
     & 1000 & 10000 & -- & 9 & 3274.86 & -- & 286.17 & -- & 408.13 \\
     & 1000 & 15000 & -- & 9 & 3268.33 & 5 & 1960.7 & -- & 420.17 \\
    \hline
  \end{tabular}
  \caption{Average error (\emph{err})of \algo~with respect to Gurobi, number of 
  problems (out of 10) that reached a runtime limit of $1$ hour before finding 
  a solution and average run times of Gurobi and \algo~when solving random 
  instances of the spectral risk constrained portfolio optimization problem.}
  \label{tab:results}
\end{table}

\subsection{Parameter Uncertainty}

Next, we illustrate how the stability and scalability of \algo~can be used to 
overcome parameter uncertainty when hedging the risk of a portfolio of 
derivatives.

Suppose a portfolio manager wants to hedge the risk of an existing 
portfolio $\bx_0$ of derivative instruments using a set of $n$ liquid 
derivative 
positions. 
Let $\tilde{V}_0(\tilde{\bS}_t)$ and $\tilde{V}_i(\tilde{\bS}_t)$ denote, 
respectively, the
value of the initial portfolio $\bx_0$ and the value of derivative
instrument~$i \in \{1, \ldots, n\}$ at time $t$, as
functions of  
the 
vector of 
underlying asset prices 
$\tilde{\bS}_t\in \mR^s$. 
Let $\tilde{\ell}_0(t) = \tilde{V}_0(\tilde{\bS}_0) - 
\tilde{V}_0(\tilde{\bS}_t)$ (resp. $\tilde{\ell}_i(t) = \tilde{V}_i(\tilde{\bS}_0) - 
\tilde{V}_i(\tilde{\bS}_t)$) denote the loss of the 
initial portfolio (resp. derivative instrument $i$) at time~$t$.
Then, the loss at time $t$ of a hedging portfolio $\bx \in \mR^n$ is given by 
$\sum_{i=1}^n \tilde{\ell}_i(t) x_i$, and the total loss at time $t$ for the 
portfolio manager is $\tilde{\ell}_0(t) + \sum_{i=1}^n \tilde{\ell}_i(t) x_i$.
Note that, in contrast with our previous notation, $x_i$ now denotes the 
total \emph{number} of units of derivative $i$ purchased. Therefore, in
what follows we drop the portfolio constraint  
$\ones^\top \bx = 1$.

Suppose the
underlying asset prices $\tilde{\bS}_t$ are log-normally distributed with mean 
vector $\bfpi$ and \emph{unknown} covariance matrix $\tilde{\bfSigma}_t = 
\tilde{\bD}_t \bR 
\tilde{\bD}_t$, where $\bR$ is a constant correlation matrix and 
$\tilde{\bD}_t = \text{diag}(\tilde{\bfsigma}_t)$ is a diagonal matrix 
of \emph{unknown} volatilities  at time $t$.
Suppose the portfolio manager knows the current volatility $\bfsigma_0$, and 
believes that the volatility at the time horizon $T$ is of the form $\bfsigma_T 
= 
\bfsigma_0 + \sum_{p=1}^q \omega_p \bfrho_p$, where $\bfrho_p \in \reals^s$ are 
known factors and $\omega_p \in [-1, 1]$ are the corresponding \emph{unknown}
weights.
For $\bfomega \in \Omega : = \{-1, 1\}^q \cup \{\bo\}$, let 
$\bfell_0(\bfomega) \in \reals^N$  (resp $\bfell_i(\bfomega) \in \reals^N$) 
denote the vector of $N$ samples of the loss $\tilde{\ell}_0(T)$ on the initial 
portfolio  (resp. the loss $\tilde{\ell}_i(T)$  on derivative
instrument~$i$) when the 
volatility vector $\bfsigma_T = \bfsigma_0 + \sum_{p=1}^q \omega_p \bfrho_p$.
For a subset $W$ of $\Omega$, consider the following hedging portfolio 
selection problem:
\begin{align}
\Pi(W) : &=
\begin{array}[t]{rll}
\max\limits_{\bx} & \min_{\bfomega \in W} \left\{ \mu_0 + 
\bfmu(\bfomega)^{\top}\bx - 
\lambda 
\norm{\bx}_{1} \right\} \\
\text{s.t.} & \ES{\beta}{\bfell_0(\bfomega) + \bL(\bfomega)\bx} \leq \alpha 
\ES{\beta}{\bfell_0(\bfomega)}, & \bfomega \in 
W\\
& \norm{\bx}_\infty \leq B\\
\end{array}
\label{eq:derivatives} \\
&= \begin{array}[t]{rll}
\max\limits_{\bx,\mu} & \mu - \lambda 
\norm{\bx}_{1} \\
\text{s.t.} & \mu \leq \mu_0 + \bfmu(\bfomega)^\top \bx, & \bfomega \in W\\
&\ES{\beta}{\bfell_0(\bfomega) + \bL(\bfomega)\bx} \leq \alpha 
\ES{\beta}{\bfell_0(\bfomega)}, & \bfomega \in W\\
& \norm{\bx}_\infty \leq B,\\
\end{array} \nonumber
\end{align}
where $\bL(\bfomega) =
[\boldsymbol{\ell}_{1}(\bfomega) \ldots
\boldsymbol{\ell}_n(\bfomega)]$,  
 $\bfmu(\bfomega) = -\frac{1}{N} \bL(\bfomega)^\top \ones$,
 and $\mu_0 = -\frac{1}{N} \bfell_0(\bfomega)^\top \ones$. By solving 
 problem~\eqref{eq:derivatives}, the portfolio 
 manager is looking to compute an $\ell_1$-regularized 
 hedging portfolio $\bx$ that maximizes the worst-case (w.r.t. $W$) 
 expected return of the total portfolio $[\bx_0^\top, \bx^\top]^\top$,
 while ensuring that the worst case expected shortfall drops by 
 factor of $\alpha < 1$.
We define $\Pi(\{\bo\})$ (resp. $\Pi(\{-1,1\}^q)$) as the nominal (resp. 
robust) portfolio selection problem.
Since we allow the hedging portfolio $\bx$ to have both long and short 
positions, 
in order to be robust against uncertainty in the parameters
$\omega_p$ we must consider all the possible worst-case risk models $\bfomega 
\in 
\{-1,1\}^q$.
Problem~\eqref{eq:derivatives} is equivalent to
\begin{align}
  \begin{array}[t]{rll}
  \max\limits_{\bar{\bx}} & \bar{\bfmu}^\top \bar{\bx} -
  \lambda \norm{\bar{\bx}}_{1} \\
  \text{s.t.} & \ES{0}{\bfell_0(\bfomega) + \hat{\bL}(\bfomega)\bar{\bx}} 
    \leq 0, & \bfomega \in W\\
  & \ES{\beta}{\bfell_0(\bfomega) + \bar{\bL}(\bfomega)\bar{\bx}} 
  \leq \alpha 
  \ES{\beta}{\bfell_0(\bfomega)}, & \bfomega \in 
  W\\
  & \bl \leq \bar{\bx} \leq \bu,\\
  \end{array}
\end{align}
where $\bar{\bx} = [\bx^\top, \mu^+, \mu^-]^\top$, 
$\bar{\bfmu} = [\bo^\top, \lambda+1, \lambda-1]^\top$, 
$\hat{\bL}(\bfomega) = [\bL(\bfomega), \ones, -\ones]$, 
$\bar{\bL}(\bfomega) = [\bL(\bfomega), \bo, \bo]$, 
$\bl = [-B \ones^\top, 0, 0]^\top$, and 
$\bu = [B\ones^\top, \infty, \infty]^\top$. Thus, by slightly modifying 
\algo~to deal with box constraints of the form $\bl \leq \bx \leq \bu$ instead 
of the portfolio and leverage constraints $\ones^\top \bx = 1$ and 
$\norm{\bx}_\infty \leq B$, we are able to exploit its stability and 
scalability 
% --illustrated in Section~\ref{sec:scale}--
to construct hedging portfolios that 
are 
robust against parameter uncertainty.
%We would like to modify nominal problem (\ref{eq:nominal}) in order to produce
%optimal portfolios that will not violate the risk constraint once the real
%parameters $\bfpi$ and $\bfSigma$, and hence the asset prices
%$\tilde{\bS}_{T}$ and the derivative values $\tilde{\bV}(T)$,
%are realized. Suppose that $\bfpi = \hat{\bfpi}$
%but $\bfSigma = U_{5\%} \hat{\bfSigma}$, where $U_{5\%}$ is a uniform random
%variable in the interval $[0.95, 1.05]$. 
%Note that both long and short positions are allowed in portfolio $\bx$.
%Consider the ``robust'' sparse
%derivatives portfolio selection problem, which includes the two worst case risk models corresponding to $k=0.95$ and $k=1.05$: 
%\begin{equation}
%\begin{array}{rll}
%\max & \bfmu^{\top}\bx - \lambda \norm{\bx}_{1}\\
%\text{s.t.} & \ES{\beta}{\bL_k\bx} \leq \alpha, & k \in \{1-\kappa, 1, 1+\kappa\},\\
%& \bV(0)^{\top}\bx =1, \\
%& \norm{\bx}_\infty \leq B.\\
%\end{array}
%\label{eq:robust}
%\end{equation}

In what follows, we show that, using \algo, one can
construct a portfolio that reduces the risk of the initial portfolio while 
removing the impact of the uncertain parameters on the expected 
return.
Following~\cite{alexander2003derivative}, we assumed that the initial 
portfolio 
consisted of four short positions of European at-the-money binary call options, 
each on one of four correlated assets, with maturity in 4, 6, 8, and 10 months, 
respectively. The hedging universe was composed of 20 vanilla European calls on 
each asset, given by the combination of strike prices $[0.9, 0.95, 1, 1.05, 
1.1] S_0$ and maturities $[2, 3, 4, 6]$ months, and the assets themselves. The 
time horizon was $T = 1$ month. We used $N=25,000$ Monte Carlo samples to 
simulate the underlying asset prices. The derivatives were priced using 
Black-Scholes formulae. The rest of the problem parameters were set as follows: 
the $q=2$ factors affecting the volatility, $\bfrho_1 = 
0.02[1,1,1,1]^\top$ 
and $\bfrho_2 = 0.02[1, -1, 1, -1]^\top$; the expected shortfall level 
$\beta=0.95$; 
the risk reduction factor $\alpha = 0.5$, i.e. the portfolio manager is looking 
reduce his exposure by half; the leverage bound 
$B=1$; and the parameter controlling the sparsity of the portfolio $\lambda = 
\theta \frac{2\bfmu(\bo)^\top  \bx^\ast}{\norm{\bx^\ast}_1}$, where $\theta \in
\{0, 0.5, 1\}$, and $\bx^\ast$ is the optimal solution to $\Pi(\{\bo\})$ with
$\lambda = 0$.
  
%Figure~\ref{fig:norms} shows the $\ell_0$-norm, the $\ell_1$-norm, and the 
%mean return of the optimal nominal and robust portfolios as functions of 
%parameter $\theta$ (constant $\alpha = 1$), and of the risk budget $\alpha$ 
%(constant $\theta = 8$). As $\theta$ increases, the $\ell_1$-norm penalty 
%$\lambda$ in~\eqref{eq:derivatives} decreases; therefore, the $\ell_1$-norm 
%and 
%the mean return of the optimal portfolios increase. Similarly, the 
%$\ell_1$-norm and the mean return of the optimal portfolios are increasing 
%functions of the risk budget $\alpha$. The support or $\ell_0$-norm of the 
%robust portfolio is a monotonic function of $\theta$ and $\alpha$, but this 
%does not seem to be the case for the nominal portfolio. Recall that we use the 
%$\ell_1$-norm as a convex proxy for the $\ell_0$-norm.
%Note that the nominal and robust portfolios have a similar $\ell_1$-norm and 
%mean return.
%These values are also similar for the super-robust portfolio, but where omitted from Figure \ref{fig:norms} for ease of visualization.

%\begin{figure}
%	\begin{center}
%		\includegraphics[width=\textwidth]{norms.pdf}
%		\caption{$\ell_0$-norm, $\ell_1$-norm, and mean return of the nominal and 
%robust
%			portfolios as functions of the parameter controlling the sparsity of the 
%portfolio $\theta$ (left) and the risk budget $\alpha$ (right).}
%		\label{fig:norms}
%	\end{center}
%\end{figure}

Figures~\ref{fig:riskret0} and~\ref{fig:riskret1} show the out-of-sample 
expected shortfall and mean return of the initial,
nominal and robust portfolios, as functions of the uncertain parameters 
$(\omega_1, \omega_2) \in [-1,1]\times\{-1,0,1\}$, when the sparsity parameter 
$\theta = 0$ and 
$\theta = 1$, respectively.
Note that, in all cases, the risk constraint $\ES{\beta}{\bfell_0(\bfomega) + 
\bar{\bL}(\bfomega)\bar{\bx}} 
  \leq \alpha \ES{\beta}{\bfell_0(\bfomega)}$ is violated by the nominal 
  portfolio
  for $\omega_1 > 0$. On the other hand, the risk of the final robust portfolio 
  is always less than half of that of the initial portfolio, regardless of the 
  uncertain parameter values. In addition, the expected rate of return of the 
  robust portfolio is virtually independent of the uncertain parameters 
  $(\omega_1, \omega_2)$. In contrast, the expected rate of return of the 
  nominal portfolio varies significantly as the uncertain parameters $\omega_1$ 
  and $\omega_2$ change. 
%  Thus, we are able to construct robust portfolios with almost constant mean 
%  rate return, and guaranteed to decrease the exposure of the portfolio 
%manager 
%  by at least half for every realization of the uncertain parameters.
  Note that 
  we are able to solve for the robust portfolio only
  because \algo\ is computationally much more efficient as compared to the 
  naive LP approach. In fact, 
  \algo\ is so efficient that  one can solve portfolio selection problems with 
  more complicated uncertainty in the covariance matrix $\bfSigma$, or 
  uncertainty in the mean return vector $\bfpi$, by including more risk 
  constraints in~\eqref{eq:derivatives}. Finally, Figure~\ref{fig:holdings} 
  shows the positions $x_i$ of the optimal 
  nominal and robust
  portfolios, for $\theta=0.5$ and $\theta=1$. Note that the robust porfolio 
  holds position in almost all the instruments that the nominal porftolio does. 
  However, the robust portfolio holds positions in other additional assets. 
  These positions have the desired effect of reducing the out-of-sample risk 
  and reducing the expected return variance. It is also worth noting that the 
  sparsity parameter $\theta$ seems to have a larger impact on the robust 
  portfolio holdings than on the nominal portfolio ones.

\begin{figure}
	\begin{center}
		\includegraphics[width=\textwidth]{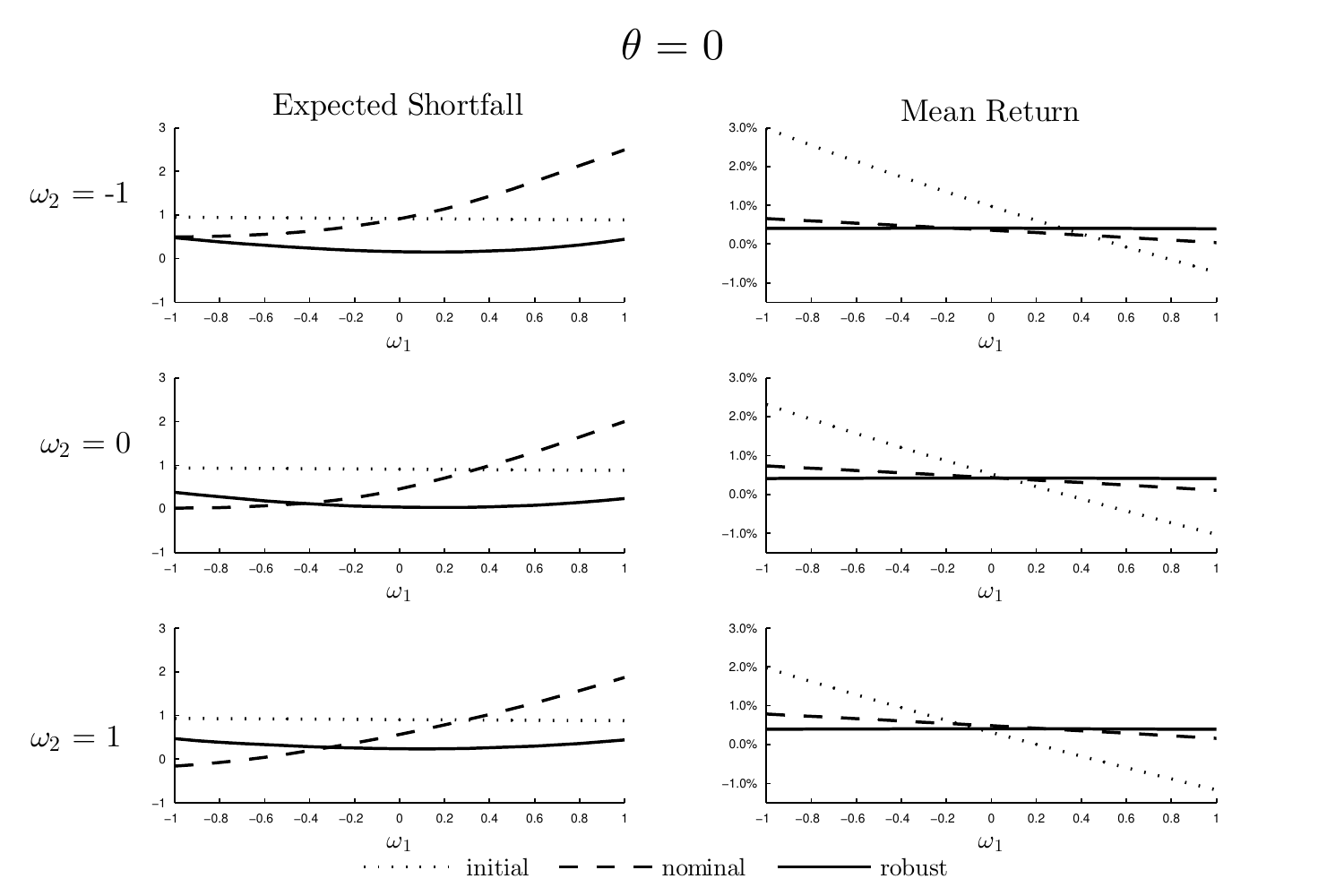}
		\caption{Out-of-sample expected shortfall and mean return of the initial, 
		nominal and robust portfolios, as a function of the uncertain parameters 
		$(\omega_1, \omega_2) \in [-1, 1]\times\{-1,0,1\}$. The sparsity parameter 
		$\theta = 0$.}
		\label{fig:riskret0}
	\end{center}
\end{figure}
\begin{figure}
	\begin{center}
		\includegraphics[width=\textwidth]{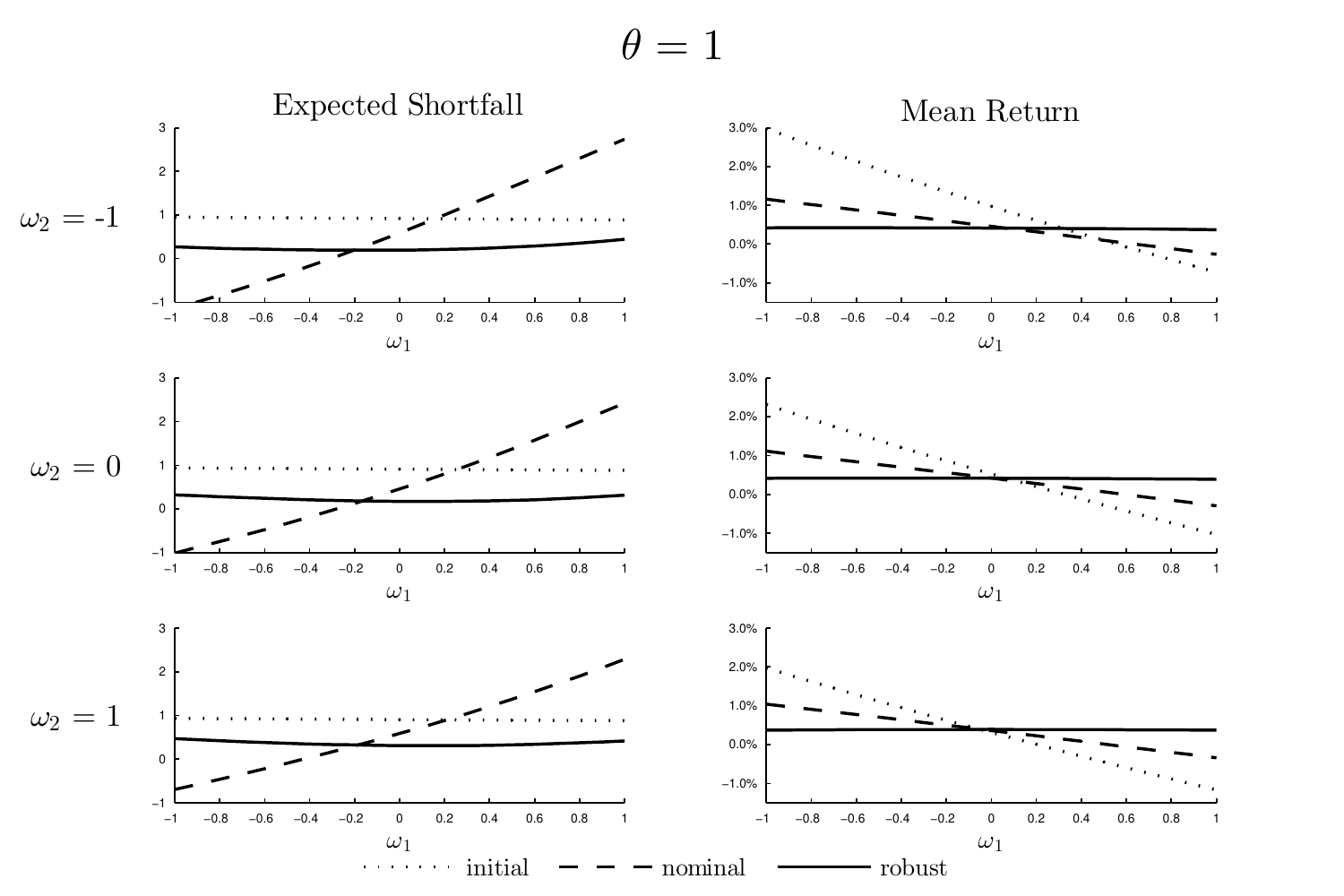}
		\caption{Out-of-sample expected shortfall and mean return of the initial, 
		nominal and robust portfolios, as a function of the uncertain parameters 
		$(\omega_1, \omega_2) \in [-1, 1]\times\{-1,0,1\}$. The sparsity parameter 
		$\theta = 1$.}
		\label{fig:riskret1}
	\end{center}
\end{figure}

\begin{figure}
	\begin{center}
		\includegraphics[width=\textwidth]{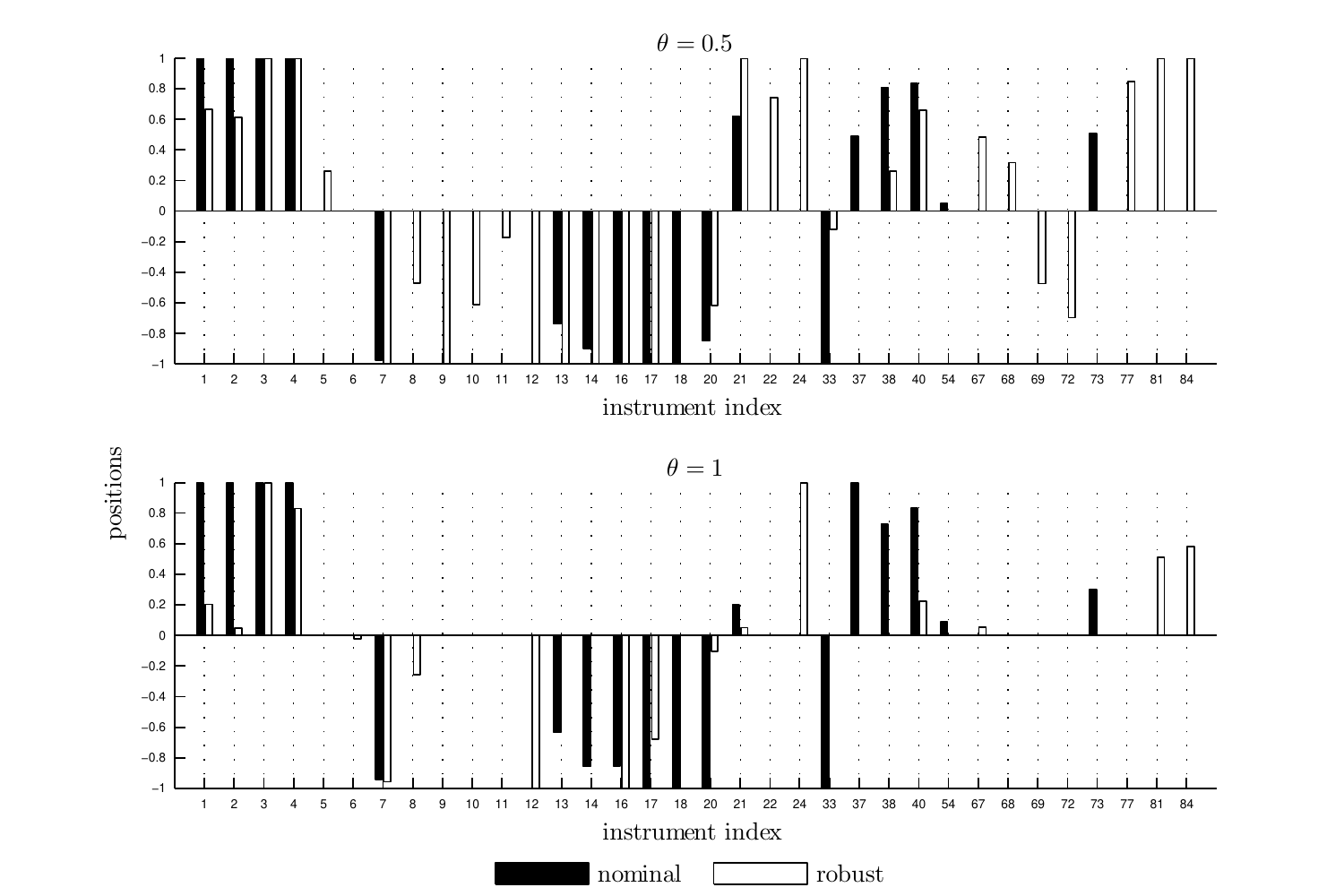}
		\caption{Holdings of the nominal and robust portfolios. The sparsity 
		parameter $\theta=0.5$ (top) and $\theta=1$ (bottom).}
		\label{fig:holdings}
	\end{center}
\end{figure}

\section{Conclusion}\label{sec:conc}
In this paper, we propose a simple gradient-based
algorithm \textsc{SpecRiskAllocate} for solving the portfolio selection
problem with multiple spectral risk constraints. 
This algorithm computes the optimal portfolio by solving a
sequence of separable convex QPs over the initial feasible set,
i.e. the formulation does not increase the dimension of the problem to
represent the risk measures. \textsc{SpecRiskAllocate} is very
efficient both in theory and in practice. Our numerical experiments
show that \textsc{SpecRiskAllocate} is at least one order of magnitude
faster than the state-of-the-art general purpose solver on most
instances of the spectral risk constrained portfolio selection problem that
are of practical interest. Moreover, our numerical experiments show
that \textsc{SpecRiskAllocate} allows portfolio managers to impose
constraints with respect to multiple risk models as a means of inducing
robustness in their portfolios against parameter uncertainty. 

\newpage
\bibliography{references}

\newpage
\appendix
\section{Smoothing of $g(x)$}
\label{app:smoothing}
Define the function
\begin{equation}
  \label{eq:fnu-def}
  f_{\beta}^{(\nu)}(\bfzeta) =
  \begin{array}[t]{rl}
    \max & \bfzeta^{\top}\bq - \frac{\nu}{2}\norm{\bq}_{2} \\
    \text{s.t.} &  0 \leq \bq \leq \frac{1}{(1-\beta)N}\ones, \\
    & \ones^{\top} \bq = 1.
  \end{array}
\end{equation}
\cite{Nesterov2005} establishes
that $f_{\beta}^{(\nu)}(\bfzeta)$ is a differentiable strongly convex
function with gradient $\nabla f^{(\nu)}_{\beta}(\bfzeta) = \bq^\ast$, where 
$\bq^\ast$ is
the unique solution to~\eqref{eq:fnu-def}. The gradient $\nabla f^{(\nu)}_{\beta}$ is Lipschitz
continuous with Lipschitz constant $\sfrac{1}{\nu}$.  Moreover,
$f_{\beta}^{(\nu)}$ satisfies $\ES{\beta}{\bfzeta} - \nu \leq
f^{(\nu)}_{\beta}(\bfzeta) \leq \ES{\beta}{\bfzeta}$,
i.e. $f^{(\nu)}_{\beta}(\bfzeta)$ is a $\nu$-approximation to
$\ES{\beta}{\bfzeta}$. 

Let $\rho(\bfzeta) = \sum_{\ell=1}^d \gamma_\ell
\ES{\beta_\ell}{\bfzeta}$ denote any generalized spectral risk
function. We define the smoothed spectral risk function as
\eq
\rho^{(\nu)}(\bfzeta) = \sum_{\ell=1}^d \gamma_\ell
f^{(\nu)}_{\beta_\ell}(\bfzeta).
\en
Since $\sum_{\ell=1}^{d} \gamma_{\ell} = 1$ for all generalized spectral risk functions, it
follows that $\rho(\bfzeta) - \nu \leq \rho^{(\nu)}(\bfzeta) \leq
\rho(\bfzeta)$. The gradient of $\rho^{(\nu)}(\bfzeta)$ is given by
$\nabla \rho^{(\nu)}(\bfzeta) = \sum_{\ell=1}^d \gamma_{\ell} \bq^\ast_{\ell}$,
where $\bq^\ast_{\ell}$ is the unique optimal solution to
(\ref{eq:fnu-def}) with $\beta = \beta_{\ell}$.

Finally, define
\begin{equation}
  \label{eq:psi-def}
  \Psi^{(\delta)}(\bt) = 
  \begin{array}[t]{rl}
    \max & \bt^{\top}\bu - \frac{\delta}{2}\norm{\bu}_{2} \\
    \text{s.t.} & \ones^{\top}\bu = 1 \\
    & \bu \geq \mathbf{0}.\\
  \end{array}
\end{equation}
$\Psi^{(\delta)}$ is a differentiable convex function with Lipschitz continuous gradient $\nabla \Psi^{(\delta)}(\bt) = \bu^\ast$, where $\bu^\ast$ is the unique solution to \eqref{eq:psi-def}, and Lipschitz constant
$\sfrac{1}{\delta}$~\citep{Nesterov2005}. In addition, we have that
$\Psi(\bt) - \delta \leq \Psi^{(\delta)}(\bt) \leq \Psi(\bt)$.

We define the smoothing of $g(\bx)$ as
\eq g_{\nu\delta}(\bx) =
\Psi^{(\delta)}\left(\rho^{(\nu)}_1(\bL_1\bx)-\alpha_1, \ldots,
  \rho^{(\nu)}_m(\bL_m\bx)-\alpha_m, 0\right).  \en
Theorem [7] in~\cite{IyengarPS11:packing} (see,
also~\cite{HodaGilpinPena2010}) guarantees that $g_{\nu\delta}(\bx)$
is a convex function with Lipschitz continuous gradient
\begin{equation}
	\nabla g_{\nu\delta}(\bx) = \sum_{k=1}^m u^\ast_k \bL_k^\top \nabla\rho^{(\nu)}_k(\bL_k \bx),
	\label{eq:gradient-gsmooth}
\end{equation}
where \eq \bu^\ast = \begin{array}[t]{rl}
		\argmax & \sum_{k=1}^m u_k \left( \rho^{(\nu)}_k(\bL_k\bx)-\alpha_k \right) -  \frac{\delta}{2}\norm{\bu}_{2} \\
		\text{s.t.} & \ones^{\top}\bu = 1 \\
					& \bu \geq \bo.
					\end{array} \en
Moreover, $g_{\nu\delta}(\bx)$ is a $(\nu+\delta)$-approximation to $g(\bx)$, i.e. $g(\bx) - \nu - \delta \leq g_{\nu\delta}(\bx) \leq g(\bx)$.

\section{Details of \textsc{SpecRiskAllocate}}
\label{app:specriskalgo}

Recall that
the \FISTA~iterates are computed by solving an $\ell_1$-penalized
QP
of the form \eqref{eq:FISTA-mv}.
Next, we show how to solve this
problem using a one-dimensional search.
Dualizing the constraint $\ones^\top\bx = 1$, we obtain the following optimization problem:
$$\cL(\gamma) = \min_{\norm{\bx}_{\infty} \leq B} \left\{ \eta\lambda\norm{\bx}_1 + (\bfxi - C\by + \gamma\ones)^\top \bx + \frac{C}{2} \bx^\top\bx \right\}.$$
Writing $\bx = \bw- \bv$, where $\bw ,\bv \geq \bo$, observe that
\begin{align*}
	\cL(\gamma) = & \min_{\bo \leq \bw \leq B\ones} \left\{ \left( \eta\lambda\ones + \bfxi -
    C\by+\gamma\ones \right)^\top \bw + \frac{C}{2} \bw^\top\bw \right\} \\
		& + \min_{\bo \leq \bv \leq B\ones} \left\{ \left( \eta\lambda\ones - \bfxi + C\by - \gamma\ones \right)^\top \bv 
  - \frac{C}{2} \bv^\top\bv \right\},
\end{align*}
where we have ignored the cross terms $\bw^\top\bv$ because they are zero in any optimal solution. The optimal solution to $\cL(\gamma)$ is given by
$x^\ast_i(\gamma) = \min\left\{(\bar{c}_i -\gamma)/C,B\right\}^+ - 
  \min\left\{(\underline{c}_i +\gamma)/C,B\right\}^+$,
where $\bar{c}_i = -\eta\lambda - \xi_i + C y_i$, and $\underline{c}_i = -
\eta\lambda + \xi_i - C y_i$, $i = 1,\ldots, n$. The optimal solution
to~\eqref{eq:FISTA-mv} can
be recovered by finding the dual variable $\gamma^\ast$ such that
$\ones^\top\bx^\ast(\gamma^\ast) = 1$. Since $\lim_{\gamma \rightarrow
  \infty}\bx^\ast(\gamma) = -B\ones$ and $\lim_{\gamma \rightarrow
  -\infty} \bx^\ast(\gamma) = B\ones$, it follows that there exists
$\gamma^\ast \in (-\infty,\infty) $ such that $\ones^\top\bx^\ast(\gamma^\ast) =
1$. The computational complexity of finding $\gamma^\ast$ is
dominated by the computational cost of sorting the set $\cup_{1\leq i
  \leq n}\{\bar{c}_i, \underline{c}_i\}$.

\FISTA~(see Algorithm \ref{fig:FISTA}) calls subroutine \text{ComputeGradient}, displayed in Algorithm~\ref{fig:grad}, to compute the gradient $\bfxi$.
Computing gradient $\bfxi$ requires computing the gradient $\nabla g_{\nu\delta}(\bx)$  (cf.~\eqref{eq:gradient-gsmooth}), which requires solving one QP of the form
(\ref{eq:psi-def}) and $\sum_{k=1}^m d_k$ QPs of the form~\eqref{eq:fnu-def}. Each of
these QPs is of the form
\begin{equation}
  \label{eq:grad-qp}
  \begin{array}{rl}
    \mbox{max} & \bc^\top\bx - \frac{1}{2}\norm{\bx}_2^2,\\
    \mbox{s.t.} & \ones^\top\bx = 1, \\
    & \bo \leq \bx \leq \bb,
  \end{array}
\end{equation}
where the bound $\bb \geq 0$ satisfies $\ones^\top \bb\geq 1$, and is
possibly infinite. Dualizing the constraint $\ones^\top\bx = 1$, we obtain the following separable QP:
\eq \cL(\gamma) = \max_{0 \leq \bx \leq \bb} \left\{ \sum_{i=1}^n(c_i-\gamma)x_i - \frac{1}{2}x_i^2 \right\}.\en
The optimal solution to $\cL(\gamma)$ is given by $x^\ast_i(\gamma) =
\min\{c_i-\gamma,b_i\}^+$, $i = 1,\ldots, n$. The
optimal solution to~\eqref{eq:grad-qp} can be recovered by finding the
dual variable $\gamma^\ast$ such that $\ones^\top\bx^\ast(\gamma^\ast) = 1$. Since
$\lim_{\gamma \rightarrow \infty}\bx^\ast(\gamma) = \bo$ and
$\lim_{\gamma \rightarrow -\infty} \bx^\ast(\gamma) = \bb$, it follows
that there exists $\gamma^\ast \in (-\infty,\infty) $ such that
$\ones^\top\bx^\ast(\gamma^\ast) = 1$. The computational complexity of
computing $\gamma^\ast$ is dominated by the computational cost of sorting
the set $\cup_{1 \leq i \leq n}\{c_i,c_i-b_i\}$.

\begin{algorithm}[tp]
  \begin{algorithmic}[1]
    \FOR {$k=1$ \TO $m$} \FOR{$\ell=1$ \TO $d_{k}$} \STATE
    $\bq_{k\ell} \gets \argmax \left\{\bq^{\top}\bL_{k}\by -
      \frac{\nu}{2}\norm{\bq}_{2}^{2} : \ones^{\top} \bq = 1, 0\leq
      \bq \leq \frac{1}{(1-\beta_{k\ell})N_k}\ones\right\}$
    \ENDFOR
    \ENDFOR
    \STATE $\bu \gets \argmax
    \left\{\sum_{k=1}^{m}v_{k}(\rho_{k}^{(\nu)}(\bL_k\by)-\alpha_{k}) -
      \frac{\delta}{2}\norm{\bv}_{2}^{2} : \sum_{k=1}^{m+1} v_k = 1,
      \bv\geq \bo\right\}$ \STATE $\bfxi \gets -\eta\bfmu +
    \sum_{k=1}^{m} u_{k}\left(\sum_{\ell=1}^{d_{k}}
      \gamma_{k\ell}\bL_{k}^{\top} \bq_{k\ell}\right)$ \RETURN $\bfxi$
  \end{algorithmic}
  \caption{Function \textsc{ComputeGradient}($\by$, $\nu$, $\delta$)}
  \label{fig:grad}
\end{algorithm}

\end{document}